\newcommand{\bra}[1]{\ensuremath{\left\langle#1\right|}}
\newcommand{\ket}[1]{\ensuremath{\left|#1\right\rangle}}
\newcommand{\ketbra}[2]{\ensuremath{\left|#1\rangle\langle#2\right|}}
\newtheorem{theorem}{Theorem}[subsubsection]
\newtheorem{lemma}[theorem]{Lemma}
\theoremstyle{definition}
\newtheorem{definition}{Definition}[section]
\newtheorem{conjecture}{Conjecture}
\newtheorem{proposition}{Proposition}
\xpatchcmd{\proof}{\topsep6\p@\@plus6\p@\relax}{}{}{}
\definecolor{codegreen}{rgb}{0,0.6,0}
\definecolor{codegray}{rgb}{0.5,0.5,0.5}
\definecolor{codepurple}{rgb}{0.58,0,0.82}
\definecolor{backcolour}{rgb}{0.95,0.95,0.92}
\lstdefinestyle{mystyle}{
    backgroundcolor=\color{backcolour},   
    commentstyle=\color{codegreen},
    keywordstyle=\color{magenta},
    numberstyle=\tiny\color{codegray},
    stringstyle=\color{codepurple},
    basicstyle=\ttfamily\footnotesize,
    breakatwhitespace=false,         
    breaklines=true,                 
    captionpos=b,                    
    keepspaces=true,                 
    numbers=left,                    
    numbersep=5pt,                  
    showspaces=false,                
    showstringspaces=false,
    showtabs=false,                  
    tabsize=2
}
\begin{document}

\preprint{APS/123-QED}

\title{Quantum walk mixing is faster than classical on periodic lattices\\}

\author{Shyam Dhamapurkar}
 \email{shyam18596@gmail.com }
\affiliation{Shenzhen Institute for Quantum Science and Engineering (SIQSE),
Southern University of Science and Technology, Shenzhen, P. R. China}

\author{Xiu-Hao Deng}
\email{ dengxh@sustech.edu.cn }
\affiliation{Shenzhen Institute for Quantum Science and Engineering (SIQSE),
Southern University of Science and Technology, Shenzhen, P. R. China}
\affiliation{International Quantum Academy (SIQA), and Shenzhen Branch,
Hefei National Laboratory, Futian District, Shenzhen, P. R. China}

\date{\today}

\begin{abstract} The quantum mixing time is a critical factor affecting the efficiency of quantum sampling and algorithm performance. It refers to the minimum time required for a quantum walk to approach its limiting distribution closely and has implications across the areas of quantum computation. This work focuses on the continuous time quantum walk mixing on a regular graph, evolving according to the unitary map $U = e^{i \Bar{A}t}$, where the Hamiltonian $\Bar{A}$ is the normalized adjacency matrix of the graph. In~[Physical Review A 76, 042306 (2007).], Richter previously showed that this walk mixes in time $O(n d\log{(d)\log{(1/\epsilon)}})$ with $O( \log{(d)\log{(1/\epsilon)}})$ intermediate measurements when the graph is the $d-$dimensional periodic lattice $\mathbb{Z}_{n}\times \mathbb{Z}_{n}\times \dots \times \mathbb{Z}_{n}$. We extend this analysis to the periodic lattice $\mathcal{L} = \mathbb{Z}_{n_1}\times \mathbb{Z}_{n_2}\times \dots \times \mathbb{Z}_{n_d}$, relaxing the assumption that $n_i$ are identical.
We provide two quantum walks on periodic lattices that achieve faster mixing compared to classical random walks: 1. A coordinate-wise quantum walk that mixes in $O\Big(\Big(\sum_{i=1}^{d} n_i \Big) \log{(d/\epsilon)}\Big)$ time with $O(d \log(d/\epsilon))$ measurements. 2. A continuous-time quantum walk with $O(\log(1/\epsilon))$ measurements that conjecturally mixes in $O(\sum_{i=1}^d n_i(\log(n_1))^2 \log(1/\epsilon))$ time. Our results demonstrate a quadratic speedup over the classical mixing time $O(dn_1^2 \log(d/\epsilon))$ on the generalized periodic lattice $\mathcal{L}$. We have provided analytical evidence and numerical simulations to support the conjectured faster mixing time of the continuous-time quantum walk algorithm.
Making progress towards proving the general conjecture that quantum walks on regular graphs mix in $O(\delta^{-1/2} \log(N) \log(1/\epsilon))$ time, where $\delta$ is the spectral gap and $N$ is the number of vertices.
 \end{abstract}

\keywords{Continuous time quantum walks, Periodic lattices, Mixing time}%Use showkeys class option if keyword
                              %display desired
\maketitle

\noindent {\bf {\em General introduction.---}} 
Quantum walk-based algorithms have been extensively explored as a means to achieve computational advantages over classical random walk algorithms for a variety of problems(~\cite{venegas2012quantum, kadian2021quantum} for a review). Quantum walks are quantum analogues of classical random walks where the walker takes a step in a quantum superposition of positions over a graph or lattice. 
Algorithms based on quantum walks have been used to solve problems such as
element distinctness, triangle finding, subset finding, decision tree~(see, e.g.\ Refs.\cite{Ambainis_element_distinctness, Magniez05quantumalgorithms,Childs_Subsetfinding,Farhi_1998}). 
 Many experiments on quantum walks have been extensively studied with the NISQ (noisy intermediate-scale quantum) devices~\cite{yan2019strongly, wang2022large, gong2021quantum}. There is a recent study on search problems with an improvement over previous search algorithms with a universal approach~\cite{wang2023universal,xu2022robust}.

\begin{figure}[ht!]
        \centering
        \includegraphics[trim = 110 250 130 245, width=\linewidth, right]{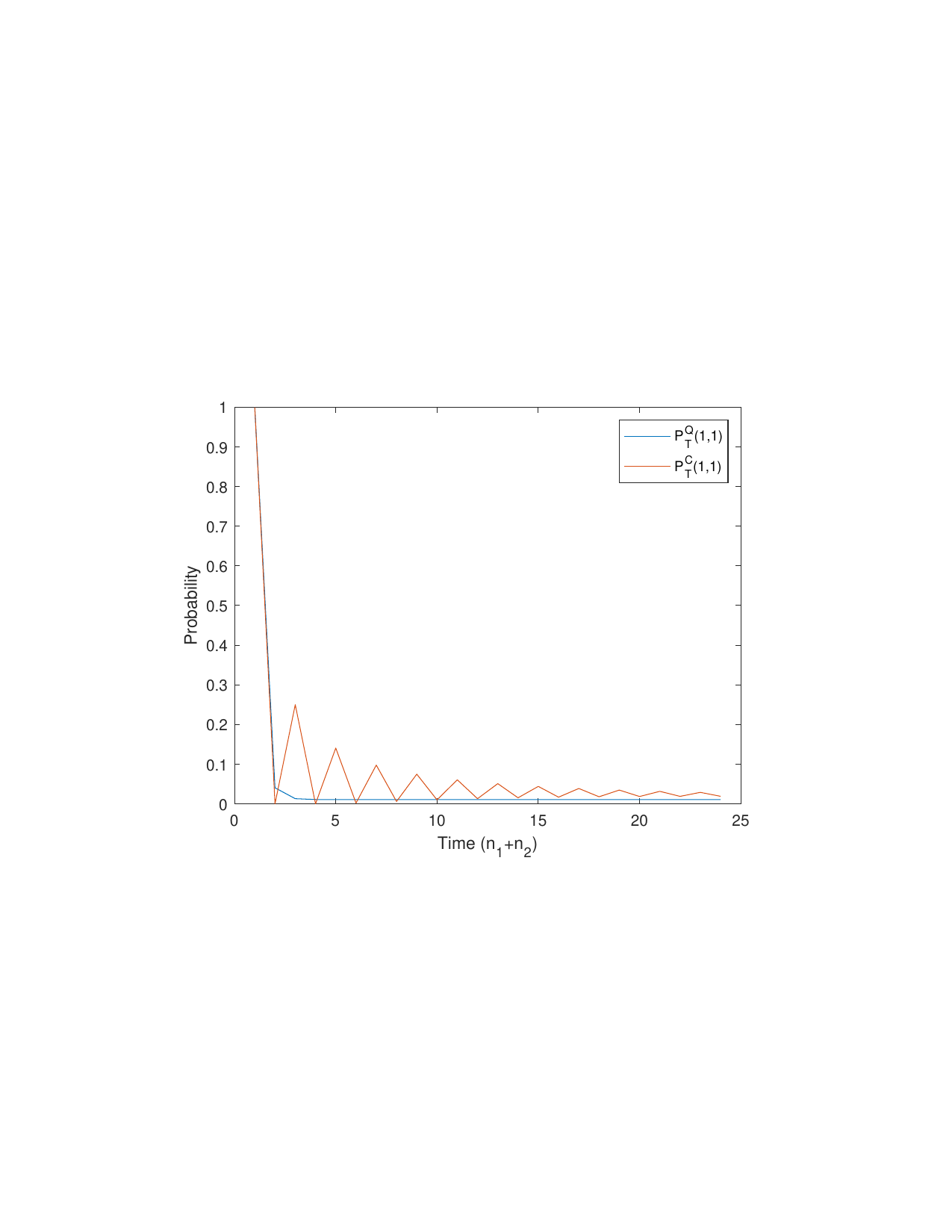}
        \caption{\textbf{Quantum walk vs Classical walk mixing on $\mathbb{Z}_{19} \times\mathbb{Z}_{5}$:} As an illustrative example, the time-averaged probability \textcolor{blue}{quantum} and \textcolor{orange}{classical} walk starting from the first vertex and measure at the same vertex after $T_{mix} = (n_1+n_2)$ steps. It is important to note that the quantum walk probability $P_T^Q(1,1)$ achieves a uniform probability value $1/n_1 n_2 = 0.01053$ in $T_{mix} = (n_1 + n_2) $ time, while the classical walk probability $P_T^C(1,1)$ only fluctuates around $1/n_1 n_2$. Classically, $T_{mix}$ is $(n_1^2 + n_2^2)$. This work proves the quadratic speedup in quantum walk mixing time on $\mathcal{L}$.
        }
        \label{fig:probability}
    \end{figure}

In this work, we are interested in quantum walks on the graphs and their mixing time. The mixing time of a (random or quantum) walk is the time for the walker's current distribution to become $\epsilon$ distance close to its limiting stationary distribution, regardless of the initial position. One of the famous applications is the Markov Chain Monte Carlo method for approximate sampling and counting~\cite{levin2017markov}. This work shows quadratic speedup in the quantum setting over the classical mixing time on $\mathcal{L}$. These results contribute to the ongoing effort of proving the conjecture that any quantum walk on a regular graph with $N$ vertices requires the square root of the classical mixing time steps to mix and reach a uniform distribution~\cite{Richter_2007}.

Quantum walks on graphs were considered by Richter~\cite{Richter_2007,PR1}, who established that they could be used to generate a near uniform distribution on the vertex set for various strategies for intermediate measurement that force decoherence. In particular, Richter demonstrated that by employing instantaneous and repeated randomized measurements, the continuous time quantum walks on $\mathbb{Z}_n^d$ achieves mixing in $O(nd \log{(d)}\log{(1/\epsilon)})$ time with $O(\log{(d)}\log{(1/\epsilon)})$ repeated measurements. The time complexity in this result is equivalent to $O(1/\sqrt{\delta}\log{1/\pi_{*}})$, where $\delta$ represents the spectral gap of the standard random walk matrix of $\mathbb{Z}_n^d$, and $\pi_{*}$ denotes the minimum among all entries of the stationary distribution $\pi$. Richter also proposed a conjecture~\cite{Richter_2007} that any quantum walk on regular graphs requires $O(\delta^{-1/2} \log(N) \log(1/\epsilon))$ time to mix. However, the conjecture remains unproven; our work is progressing towards validating this conjecture for a sub-class of regular graphs.  
 
We focus on the generalized periodic lattices $ \mathcal{L} = \mathbb{Z}_{n_1} \times \mathbb{Z}_{n_2} \times \dots \times \mathbb{Z}_{n_d}$, $n_1, n_2, \dots, n_d \in \mathbb{N}$. We note that Richter's analysis does not cover the case when the $n_i$ are not all equal. In the classical setting, we observe that the standard random walk on $\mathcal{L}$ reaches close to the stationary distribution in $O(dn_{1}^2\log(1/\epsilon))$ time, where $n_1 = \max_i\{n_i \vert 1 \leq i \leq d\}$. In a quantum setting, we do two types of quantum walks. In the \emph{first} type of quantum walk we evolve on each $\mathbb{Z}_{n_k}$ separately using the unitary $U = e^{iH_k t}$, where $H_k = I_{n_1} \otimes \dots \otimes \Bar{A}_k \otimes \dots \otimes I_{n_d}$ and measure in respective basis. We observe that this version of quantum walks on $\mathcal{L}$ mixes in $O\Big(\Big(\sum_{i=1}^{d} n_i \Big) \log{(d/\epsilon)}\Big)$ with $O(\log{(d/\epsilon)})$ measurements. Consequently, in the case where $n_j = n_1/j$ for $j \in [d]$, the improved mixing time bound is $O(n_1 (\log{d}) \log{(d/\epsilon)})$ in $O(d\log{(d/\epsilon)})$ measurements. The \emph{second} type is a continuous time quantum walk with repeated measurements~\cite{Childsexpo,childs2002example}. We provide a new direct analysis, and using a conjecture supported by numerical simulations, we show that the mixing time for the continuous time quantum walks on $\mathcal{L}$ for $d = 2$, i.e.~$\mathbb{Z}_{n_1} \times \mathbb{Z}_{n_2}$ is $O((n_1+ n_2)(\log{(n_1)})^2 \log{(1/\epsilon}))$ with $O(\log{(1/\epsilon)})$ measurements. Although this is slightly weaker than the result $(i)$ of the abstract for $d = 2$ case, the method gives a general approach to solving the mixing time problem for $\mathcal{L}$. In the end, we propose a hypothesis suggesting that the mixing time for a quantum walk on $\mathcal{L}$, with repeated measurement, is $O(\sum_{i}^d n_i (\log{(n_1)})^2 \log{(1/\epsilon)})$ with $O(\log{(1/\epsilon)})$ measurements. 

The structure of this paper is as follows: In the first section, we provide the necessary background information. Subsequently, prove that the classical random walks on $\mathcal{L}$ mixes within quadratic time. Moving on to the next section, we discuss the results using $O(d\log{(d/\epsilon)})$ measurements and provide proof for the general case. Then we present analytical evidence showcasing that for the special case of $\mathbb{Z}_{n_1} \times \mathbb{Z}_{n_2}$, the continuous time quantum walk mixes with repeated measurement in linear time. Then, we propose a hypothesis for the general periodic lattice problem. We conclude the paper with a summary and an outlook for future research.

\noindent {\bf {\em Preliminaries.---}} 
This section introduces key definitions and propositions related to Markov chains and mixing time~\cite{levin2017markov} since the random walk is a special case of a Markov chain. The section defines concepts such as Markov chain, stationary distribution, maximum pairwise column distance, and submultiplicativity, and presents two important propositions related to the mixing time of Markov chains~\cite{aldous1995reversible}.  

A Markov chain is a stochastic process ${X_0, X_1, X_2, ...}$ with a countable set of states $S$, where the probability of transitioning from one state to another depends only on the current state. Mathematically, for any states $i, j \in S$ and any time steps $t \geq 0$, the Markov property can be expressed as:

$P(X_{t+1} = j | X_0 = i, X_1 = x_1, \dots, X_t = x_t ) = P(X_{t+1} = j | X_t = i)$. $P(X_{t+1} = j | X_t = i)$ represents the probability of transitioning from state $i$ to state $j$ in a one-time step.

\begin{definition}
    Markov chain $P$ has a stationary distribution $\pi$ implies that $P \pi = \pi$.
\end{definition}
 
\begin{definition}
  Consider an irreducible (strongly connected) and aperiodic (non-bipartite) Markov Chain $P$ with a stationary distribution $\pi$. The mixing time(also known as \emph{threshold mixing}) can be defined as follows:

\begin{equation}
   \tau(1/2e) = min \Big \{ T: \frac{1}{2} \parallel P^{t} - \pi 1^\dagger \parallel_{1} \leq \frac{1}{2e} \hspace{1mm} \forall \hspace{1mm}t\geq T \Big \}, 
\end{equation}

where $\parallel . \parallel_{1}$ is a matrix 1-norm and $1^\dagger$ is all one row vector. Subsequently, $\tau(\epsilon)$ is called \emph{$\epsilon$-mixing}.
\end{definition}

Let $P$ be Markov chain, then 
\[d(P) = max_{jj'}\frac{1}{2}\parallel P(.,j) - P(.,j') \parallel_{1}\]
is called as maximum pairwise column distance. The following inequality holds for $d(P)$.

 \begin{equation}
    \frac{1}{2}\parallel P - \pi 1^\dagger \parallel_{1} \leq d(P) \leq \parallel P - \pi 1^\dagger \parallel_{1}.
 \end{equation}

 The distance $d(P)$ is submultiplicative, i.e. 

 \begin{equation}
 \label{submu}
  d(P_{t+t'}) \leq d(P_t) d(P_{t'})   
 \end{equation}
  for any time $t$ and $t'$. This implies that $d(P_t) \leq d(P)^t$ and $d((P_t)^{t'}) \leq d(P_t)^{t'}$.
 
 \begin{proposition}\label{fact1}\cite{aldous1995reversible}
 If $d(P) \leq \alpha$, where $\alpha$ is a constant less than 1 then $\tau_{mix}\leq \lceil \log_{1/\alpha}{2e} \rceil$.
 \end{proposition}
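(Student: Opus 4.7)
The plan is to chain together the two facts already stated in the excerpt: the sandwich inequality $\tfrac12 \|P - \pi 1^\dagger\|_1 \leq d(P) \leq \|P - \pi 1^\dagger\|_1$, and the submultiplicativity property $d(P_{t+t'}) \leq d(P_t) d(P_{t'})$, which as the excerpt already remarks implies $d(P^t) \leq d(P)^t$. Combining these, the total variation distance of $P^t$ from its stationary projection $\pi 1^\dagger$ decays geometrically in $t$ whenever $d(P)<1$.

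Concretely, I would first invoke submultiplicativity iteratively to obtain $d(P^t) \leq d(P)^t \leq \alpha^t$, using the hypothesis $d(P) \leq \alpha < 1$. Then, using the left inequality from (2) applied to $P^t$ in place of $P$ (the same proof works since $P^t$ is also a Markov chain with stationary distribution $\pi$), I would conclude
\begin{equation*}
\tfrac{1}{2} \| P^t - \pi 1^\dagger \|_1 \leq d(P^t) \leq \alpha^t.
\end{equation*}
To force the left-hand side below the threshold $1/(2e)$, it suffices to pick $t$ with $\alpha^t \leq 1/(2e)$, i.e.\ $t \geq \log_{1/\alpha}(2e)$. Taking the ceiling $t = \lceil \log_{1/\alpha}(2e) \rceil$ therefore gives $\tfrac{1}{2}\|P^t - \pi 1^\dagger\|_1 \leq 1/(2e)$.

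Finally, I would verify that this bound holds for \emph{all} $t \geq \lceil \log_{1/\alpha}(2e)\rceil$, not merely at the threshold itself. This is immediate because $d(P^t)\leq \alpha^t$ is monotonically nonincreasing in $t$ (since $\alpha<1$), so the total variation distance stays below $1/(2e)$ from the threshold onward. By the definition of $\tau(1/2e)$ in the excerpt, this yields $\tau_{mix} \leq \lceil \log_{1/\alpha}(2e) \rceil$.

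There is essentially no obstacle here; the proof is a direct two-line chain of inequalities once submultiplicativity and the distance sandwich are in place. The only point requiring a moment of care is making sure that the inequality $\tfrac12\|\cdot\|_1 \leq d(\cdot)$ applies to $P^t$ as well as $P$, which follows because $P^t$ is itself a Markov chain with the same stationary distribution $\pi$.
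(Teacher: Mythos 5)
Your proof is correct: the paper gives no proof of this proposition, simply citing Aldous--Fill, and your argument (submultiplicativity giving $d(P^t)\leq \alpha^t$, the sandwich inequality applied to $P^t$, and monotonicity of $\alpha^t$ to cover all $t$ beyond the threshold) is exactly the standard argument from that reference. No gaps; the care you take in noting that $\tfrac12\|\cdot\|_1 \leq d(\cdot)$ applies to $P^t$ because it is itself a chain with stationary distribution $\pi$ is the only nontrivial point, and you handle it correctly.
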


 \begin{proposition}\label{fact2}\cite{aldous1995reversible}
 If at least $\beta N$ entries in column of $P$ are bounded below by $\gamma/N$, then $d(P) \leq 1- \gamma[1- 2(1- \beta)]$, where $\beta > \frac{1}{2}$ and $\gamma > 0$.
 \end{proposition}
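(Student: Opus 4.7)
The plan is to bound the total variation distance between any two columns of $P$ via the well-known overlap identity
\[
\tfrac{1}{2}\|p-q\|_1 \;=\; 1 - \sum_i \min\{p_i, q_i\},
\]
valid for any two probability vectors $p,q$. The whole argument then reduces to counting how many coordinates are simultaneously ``heavy'' in the two columns $P(\cdot,j)$ and $P(\cdot,j')$.

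First I would fix two column indices $j,j'$ and introduce the heavy-entry sets $H_j = \{i : P(i,j) \geq \gamma/N\}$ and $H_{j'} = \{i : P(i,j') \geq \gamma/N\}$, each of cardinality at least $\beta N$ by hypothesis. Since both sets lie inside an index set of size $N$, inclusion--exclusion forces $|H_j \cap H_{j'}| \geq 2\beta N - N = (2\beta-1)N$, which is strictly positive precisely because $\beta > 1/2$.

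Next, for every $i$ in this intersection both entries are at least $\gamma/N$, so $\min\{P(i,j), P(i,j')\} \geq \gamma/N$. Summing this pointwise bound over $H_j \cap H_{j'}$ and substituting into the overlap identity gives
\[
\tfrac{1}{2}\|P(\cdot,j)-P(\cdot,j')\|_1 \;\leq\; 1 - (2\beta-1)N\cdot\tfrac{\gamma}{N} \;=\; 1 - \gamma\bigl[1-2(1-\beta)\bigr].
\]
Maximising over the pair $(j,j')$ then produces the stated bound on $d(P)$.

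The argument is essentially a one-line inclusion--exclusion, so there is no serious obstacle. The only point requiring a little care is the use of column-stochasticity: it is what makes both heavy sets sit inside a common index range of size $N$ and what lets one apply the overlap identity in the form above. If one later wished to apply Proposition~\ref{fact2} with heavy entries measured against a non-uniform stationary distribution, a weighted variant of the same counting would be needed; but for the statement as written, the simple cardinality count suffices.
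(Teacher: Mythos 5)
Your argument is correct: the overlap identity $\tfrac12\|p-q\|_1 = 1-\sum_i \min\{p_i,q_i\}$ combined with the inclusion--exclusion count $|H_j\cap H_{j'}|\ge (2\beta-1)N$ gives exactly $d(P)\le 1-\gamma(2\beta-1)=1-\gamma[1-2(1-\beta)]$. The paper itself gives no proof of this proposition (it is cited to Aldous--Fill), and your argument is the standard one found in that reference, so there is nothing to reconcile.
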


\noindent {\bf {\em Random walk on a periodic lattice.---}} 
We will start with discussing the standard random walk on $\mathcal{L}$. The mixing time proof of a random walk on the periodic lattices $\mathbb{Z}_n^d$ are based on the coupling idea. The coupling is defined as given a Markov chain $P$ on state space $S$, a Markovian coupling of two $P$-chains is a Markov chain $\{ (X_t, Y_t)\}_{t\geq 0}$ with state space $ S \times S $ satisfies,
\begin{align*}
\textbf{P}\{X_{t+1} &= a' | X_{t} = a, Y_{t = b}\} = P(a',a) \\
   \textbf{P}\{Y_{t+1} &= b' | X_{t} = a, Y_{t = b}\} = P(b',b),    
\end{align*}
for all $ a, a', b, b'$. It has been proved that \textit{On a d-dimensional torus $\mathbb{Z}_{n}^d$ the upper bound on $\epsilon$- mixing time for $\epsilon < \frac{1}{2}$ of a lazy random walk is $dn^2\lceil \log_{4}(\frac{d}{\epsilon})\rceil$}~\cite{levin2017markov}. In lazy random walk, we do a standard random walk with $1/2$ probability and stay at the current vertex with $1/2$ probability. The classical mixing time of a random walk on $\mathcal{L}$ is proved using the same line of argument as done for $\mathbb{Z}_n^d$ in Ref~(\cite{levin2017markov}).

In $\mathcal{L} = \mathbb{Z}_{n_1} \times \mathbb{Z}_{n_2} \times \dots \times \mathbb{Z}_{n_d}$ where $n_1, n_2, \dots, n_d \in \mathbb{N}$, vertex $x = (x_1, \dots , x_d)$ and $y = (y_1, \dots, y_d)$ are neighbours if for some $j \in \{1, 2, \dots, d \}$ we have $x_i = y_i$ for $i \neq j$ and $x_j = y_j \pm 1 \text{ mod } n_j$. To avoid the same problem of periodicity as before, we do a lazy
random walk on $\mathcal{L}$. Without loss of generality, we assume that $n_1 > n_l$ were $2 \leq l \leq d$.

\begin{theorem}\label{thm1}
On $\mathbb{Z}_{n_1} \times \mathbb{Z}_{n_2} \times \dots \times \mathbb{Z}_{n_d} $ where $n_1, n_2, \dots, n_d \in \mathbb{N}$ and $n_1 > n_l$ for $2 \leq l \leq d$, the upper bound on $\epsilon$- mixing time for $\epsilon < \frac{1}{2}$ of a lazy random walk is $2dn_{1}^2\lceil \log(\frac{d}{\epsilon})\rceil$. 
\end{theorem}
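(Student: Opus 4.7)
The plan is to prove Theorem~\ref{thm1} by adapting the standard Markovian coupling argument used for the lazy walk on the uniform torus $\mathbb{Z}_n^d$ (as in~\cite{levin2017markov}) to the case of unequal side lengths, replacing $n$ by $n_1 = \max_i n_i$ in the worst-case estimates. The overall strategy is: define an explicit coupling, bound each coordinate's coupling time, take a union bound over the $d$ coordinates, and invoke the coupling inequality.

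First I would define a coupling $(X_t, Y_t)$ on $\mathcal{L}\times\mathcal{L}$ as follows: at each step, pick a coordinate $j\in\{1,\dots,d\}$ uniformly at random. If $X_t(j)=Y_t(j)$, update both chains identically in coordinate $j$ by a lazy step on $\mathbb{Z}_{n_j}$. If $X_t(j)\neq Y_t(j)$, flip a fair coin to choose one of $X_t, Y_t$ and apply an unbiased $\pm 1$ step to that chain's $j$-th coordinate while holding the other chain fixed. A direct check of transition probabilities confirms that each marginal evolves as the lazy random walk on $\mathcal{L}$.

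Second, I would control the coordinate-wise coupling times $\tau_j := \inf\{t : X_t(j) = Y_t(j)\}$. The difference $D_j(t) := X_t(j)-Y_t(j)\bmod n_j$ changes only at steps where coordinate $j$ is chosen \emph{and} $D_j(t)\neq 0$, in which case it moves by $\pm 1$ with equal probability, and once it hits $0$ it stays there. Restricted to the sub-sequence of steps in which coordinate $j$ is picked, $D_j$ is therefore a simple random walk on $\mathbb{Z}_{n_j}$ absorbed at $0$, whose worst-case expected hitting time of $0$ is at most $n_j^2/4 \leq n_1^2/4$. Since coordinate $j$ is chosen with probability $1/d$ per overall step, Wald's identity gives $\mathbb{E}[\tau_j]\leq d n_1^2/4$. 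Markov's inequality then yields $\mathbb{P}[\tau_j > dn_1^2]\leq 1/4$, and iterating this bound over consecutive blocks of length $dn_1^2$ via the Markov property produces the geometric tail $\mathbb{P}[\tau_j > k\cdot dn_1^2]\leq 4^{-k}$.

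Third, a union bound over the $d$ coordinates gives $\mathbb{P}[\tau_{\mathrm{couple}} > k\cdot dn_1^2]\leq d\cdot 4^{-k}$, where $\tau_{\mathrm{couple}} = \max_j \tau_j$. Choosing $k$ proportional to $\lceil \log(d/\epsilon)\rceil$ (the factor-of-$2$ slack in the statement easily absorbs the change of logarithm base) drives the right-hand side below $\epsilon$. The coupling inequality $\frac{1}{2}\|P^t(x,\cdot)-\pi\|_1\leq \mathbb{P}[\tau_{\mathrm{couple}}>t]$, holding for any starting state $x$, then converts this into the desired mixing-time bound $2dn_1^2\lceil\log(d/\epsilon)\rceil$. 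The main delicate point is verifying that restricting $D_j$ to the sub-sequence of coordinate-$j$ steps genuinely produces a simple random walk on $\mathbb{Z}_{n_j}$ absorbed at $0$, and applying Wald's identity correctly to pass from coordinate-$j$ step counts back to overall time. Apart from that bookkeeping, the proof is a direct transcription of the uniform-torus argument with $n$ replaced by $n_1$ in every worst-case estimate.
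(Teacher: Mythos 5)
Your proposal is correct and follows essentially the same route as the paper's proof in Appendix A: the identical coupling (move both chains together in an agreeing coordinate, move one chain in a disagreeing coordinate), the per-coordinate expected coupling time bound $d n_j^2/4$, the geometric tail obtained by iterating Markov's inequality over blocks, a union bound over the $d$ coordinates, and the coupling inequality. Your write-up is if anything slightly more careful than the paper's in justifying the geometric tail and the reduction of each coordinate's difference to an absorbed simple random walk.
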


The proof is given in appendix \ref{classical}. Here notice that classical random walk takes time $O(d n_1^2 \log{(d/\epsilon)})$ to mix on $\mathcal{L}$. This result motivated the study of quantum walk on general periodic lattice $\mathcal{L}$.

\noindent {\bf {\em Coordinate-wise quantum walks.---}}
 We observed that if we do coordinate-wise mixing using continuous time quantum walk on $\mathcal{L}$, then it is quadratically better than the classical mixing time in Theorem~\ref{thm1}. In this continuous time quantum walk, we mix on each $\mathbb{Z}_{n_i}$ copy separately. We show that doing the quantum walk this way mixes in time $O\Big(\Big(\sum_{i=1}^{d} n_i \Big) (\log{d/\epsilon})\Big)$  for $\mathcal{L}$. We use a lemma from~\cite{PR1} about mixing time on $n-cycle$ to prove this result. If a graph is regular, the simple random walk $P$ is the graph's normalized adjacency matrix $\Bar{A}$.  

\begin{lemma}\label{lemma:lemma1}\cite{PR1}
Let $\mathbb{Z}_{n}$ be the cycle on $n \geq 2$ vertices. The continuous time quantum walks $U = e^{iPt}$, where $P$ is the simple random walk matrix on $\mathbb{Z}_{n}$ with instantaneous or repeated  measurements, mixes for any time $ t \in I :=[\frac{n}{3}, \frac{n}{2}]$.
\end{lemma}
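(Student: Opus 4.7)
The plan is to invoke Propositions \ref{fact1} and \ref{fact2} with $P$ replaced by the stochastic matrix induced by measurement at time $t$. Concretely, it suffices to exhibit absolute constants $\gamma,\beta>0$ with $\beta>\tfrac12$ such that the post-measurement distribution $P_t(j)=|\bra{j}e^{iPt}\ket{0}|^2$ satisfies $P_t(j)\ge\gamma/n$ on a fraction $\beta$ of vertices, uniformly for $t\in[n/3,n/2]$. Given this, Proposition \ref{fact2} gives $d\le 1-\gamma(2\beta-1)<1$, and submultiplicativity \eqref{submu} together with Proposition \ref{fact1} converts it into mixing after $O(\log(1/\epsilon))$ rounds of repeated measurement at any fixed $t\in I$.

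First I would diagonalize $P$ in the Fourier basis, with eigenvalues $\cos(2\pi k/n)$ and eigenvectors $\ket{\phi_k}=n^{-1/2}\sum_j e^{2\pi ijk/n}\ket{j}$. Substituting and applying the Jacobi--Anger identity $e^{it\cos\theta}=\sum_{m\in\mathbb{Z}}i^m J_m(t)e^{im\theta}$, then collapsing the inner sum over $k$ modulo $n$, gives
\begin{equation}
\bra{j}e^{iPt}\ket{0} \;=\; \sum_{\ell\in\mathbb{Z}}\,i^{\,j+\ell n}\,J_{j+\ell n}(t).
\end{equation}
For $t\le n/2$ and $|j|\le n/2$, the terms with $\ell\ne 0$ all have Bessel indices satisfying $|j+\ell n|/t\ge 1$, and Debye's exponential-decay asymptotic in the regime $m/t>1$ shows these contribute $O(e^{-cn})$ away from an $o(n)$-neighbourhood of $j=\pm n/2$ (the Airy transition region). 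Hence $P_t(j)=J_j(t)^2+O(e^{-cn})$ on all but an $o(n)$-fraction of vertices.

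Next I would apply Debye's oscillatory asymptotic on the remaining vertices: for $|j|\le(1-\delta)t$, writing $|j|=t\sin\phi$,
\begin{equation}
J_j(t)\;\sim\;\sqrt{\frac{2}{\pi t\cos\phi}}\,\cos\!\Bigl(t(\cos\phi-\phi\sin\phi)-\tfrac{\pi}{4}\Bigr),
\end{equation}
so the envelope of $J_j(t)^2$ is $\Theta(1/n)$. A count of the sign changes of the cosine factor as $j$ ranges over integers --- or, more robustly, a short time-average of $t$ over a sub-interval of $I$ of length $O(1)$, using $\int \cos^2=|I'|/2+o(|I'|)$ --- exhibits a density-$(\tfrac12+c)$ subset of indices $j\in(-t,t)$ on which $P_t(j)\ge\gamma/n$. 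Since $2t/n\ge 2/3$ for $t\ge n/3$, the overall fraction of vertices with $P_t(j)\ge\gamma/n$ strictly exceeds $1/2$, producing the required $\beta$.

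The chief obstacle is the phase-density step: proving that the oscillating cosine factor in Debye's formula is bounded away from zero on more than half of the integers $j\in(-t,t)$, uniformly in $t\in I$. Pointwise this is an equidistribution claim about $t(\cos\phi-\phi\sin\phi)\bmod\pi$ as $j$ varies, which is technically delicate; the cleanest remedy is to randomize the measurement time over a short window $[t_0,t_0+O(1)]\subset I$, reducing to the trivial $\cos^2$ time-average and a Chebyshev-style selection of good $j$. All remaining ingredients --- Jacobi--Anger, Bessel tail estimates, and the conversion via Propositions \ref{fact1}--\ref{fact2} --- are routine.
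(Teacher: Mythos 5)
The paper does not actually prove this lemma---it is imported wholesale from Richter \cite{PR1}, and Appendix~\ref{zn} only records that the cited proof works by mapping the cycle walk onto the infinite line and invoking the asymptotics of Bessel functions. Your proposal reconstructs exactly that route (Jacobi--Anger wrapping giving $\sum_{\ell} i^{j+\ell n}J_{j+\ell n}(t)$, Debye asymptotics to get an $\Omega(1/n)$ envelope on a $>\tfrac12$ fraction of vertices, then conversion via Propositions~\ref{fact1} and~\ref{fact2}), so it is essentially the same approach as the cited source; the phase-density step you flag as the chief obstacle is indeed the crux of Richter's argument rather than a defect peculiar to your write-up.
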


\begin{theorem}\label{thm2}
Continuous time quantum walk mixes on $\mathbb{Z}_{n_1} \times \mathbb{Z}_{n_2} \times \dots \times \mathbb{Z}_{n_d}$ in time of order $O\Big(\Big(\sum_{i=1}^{d} n_i \Big) \log{(d/\epsilon)}\Big)$.
\end{theorem}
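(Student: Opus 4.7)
The plan is to exploit the product structure of $\mathcal{L}$ and apply Lemma~\ref{lemma:lemma1} one coordinate at a time. Since $H_k = I_{n_1}\otimes\cdots\otimes \bar{A}_k\otimes\cdots\otimes I_{n_d}$ acts nontrivially only on the $k$-th tensor factor, the unitary $U_k(t_k)=e^{iH_k t_k}$ factors as the identity on every other coordinate. First I would fix, for each $k$, a single evolution time $t_k\in[n_k/3,\,n_k/2]$ so that Lemma~\ref{lemma:lemma1} guarantees that one cycle of evolve-then-measure on $\mathbb{Z}_{n_k}$ yields an effective classical transition $P_k$ with $d(P_k)\le\alpha$ for some constant $\alpha<1$ independent of $n_k$.

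Second, I would invoke submultiplicativity~\eqref{submu} to get $d(P_k^{m_k})\le\alpha^{m_k}$, so that $m_k=O(\log(d/\epsilon))$ repetitions of the evolve-$U_k(t_k)$-then-measure step drive the marginal distribution on the $k$-th coordinate within $\epsilon/d$ total-variation distance of uniform on $\mathbb{Z}_{n_k}$. Proposition~\ref{fact1} is the constant-$\epsilon$ version of this statement; the quantitative form above follows from iterating it. The evolution time spent on coordinate $k$ is $m_k t_k = O(n_k\log(d/\epsilon))$, using $O(\log(d/\epsilon))$ measurements per coordinate.

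Third, I would assemble the $d$ per-coordinate mixings into a global mixing. Running the protocol for coordinate $1$, then $2$, and so on from a computational-basis initial vertex keeps the walker in a product state across the $d$ coordinates throughout: the measurement projectors onto $\mathbb{Z}_{n_k}$ basis states commute with the evolution $U_j(t_j)$ for $j\ne k$, so processing coordinate $k$ neither disturbs, nor is disturbed by, any of the already-mixed or yet-to-be-mixed coordinates. Consequently, the final joint distribution is the tensor product of $d$ per-coordinate distributions, each within $\epsilon/d$ of uniform; a standard hybrid / triangle-inequality bound then controls the total-variation distance to the uniform distribution on $\mathcal{L}$ by $d\cdot(\epsilon/d)=\epsilon$, while the accumulated evolution time is $\sum_{k=1}^d m_k t_k = O\Big(\Big(\sum_{k=1}^d n_k\Big)\log(d/\epsilon)\Big)$.

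The main obstacle I anticipate is in the third step: cleanly justifying that coordinate-wise quantum walks with intermediate measurements do produce a genuine product distribution, so that the per-coordinate distances supplied by Lemma~\ref{lemma:lemma1} compose without cross-coordinate interference. Once this decoupling is established via the commutation of $H_j$ with the computational-basis projectors on coordinate $k\ne j$, the argument reduces to $d$ independent one-dimensional cycle-mixing problems, each handled by Lemma~\ref{lemma:lemma1} together with~\eqref{submu}, which should deliver the stated bound.
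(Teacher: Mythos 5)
Your proposal follows essentially the same route as the paper's proof: coordinate-wise evolution under $U_k = e^{iH_k t}$ for time $O(n_k)$ with intermediate measurements, Lemma~\ref{lemma:lemma1} combined with Propositions~\ref{fact1} and~\ref{fact2} to obtain a constant contraction $d(P_k)\le\alpha<1$ per coordinate, submultiplicativity~\eqref{submu} to amplify to $\epsilon/d$ per coordinate with $O(\log(d/\epsilon))$ repetitions, and the preserved product structure plus a triangle inequality to combine the marginals into an $\epsilon$-bound on $\mathcal{L}$. The argument and the resulting time bound match the paper's; the decoupling step you flag as the main obstacle is handled in the paper exactly as you suggest, by observing that the state remains a product state across coordinates throughout.
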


\begin{proof}

 The unitary operators to evolve the quantum walk coordinate-wise are $U_k = e^{i H_k t}$ where $H_k = I_{n_1} \otimes \dots \otimes \Bar{A}_k \otimes \dots \otimes I_{n_d} $ for $1 \leq k \leq d$. Here the initial state is $\rho_0 = \rho^{0}_{1} \otimes \rho^{0}_{2} \otimes \dots \otimes \rho^{0}_{d}$ where $\rho^0_{l} = \ket{0}\bra{0}_l$ for $1 \leq l \leq d$. 

Now applying $U_1$ on $\rho_0$ gives us

 \begin{equation}
        \begin{split}
            \rho_t & = e^{i H_1 t} \rho_0 e^{-i H_1 t}\\
             & = e^{i A_1 t} \rho^0_{1} e^{-i A_1 t} \otimes I_{n_2}\rho^0_{2}I_{n_2} \otimes I_{n_3}\rho^0_{3}I_{n_3} \otimes \dots \otimes I_{n_2}\rho^0_{d}I_{n_d}\\
             & = e^{i A_1 t} \rho^0_{1} e^{-i A_1 t} \otimes \rho^0_{2} \otimes \rho^0_{3} \otimes \dots \otimes\rho^0_{d}.
        \end{split}
    \end{equation}

When we run the quantum walk for time $t = O(n_1)$, in the first coordinate we get  $|\bra{j_1}e^{i \Bar{A}_1 \frac{n_1}{3}} \ket{0}_1| = \Theta \left(\frac{1}{\sqrt{n_1}}\right)$ for at least $\frac{2}{3}$ position $\ket{j_1}$, where $j_1 \in [0,n_1-1]$ (Ref.~\cite{PR1} and lemma~\ref{lemma:lemma1} on $n$-cycle). After the measurement in the position basis of $\mathbb{Z}_{n_1}$ we have $\frac{2}{3}$ diagonal entries from $\rho^{t}_{1}$ are $\Omega\left(\frac{1}{n_1}\right)$. Consecutively, as shown in Figure \ref{evolution}, run the quantum walk using $U_2$, $U_3$,\dots, $U_d$ for time $\frac{n_2}{3} $, $\frac{n_3}{3} $, \dots, $\frac{n_d}{3}$ and measure in their position basis respectively. 

\begin{figure}[ht]
    \centering
    \begin{tikzpicture}
\draw[gray, thick, ->] (-7.5,0) -- (0.9,0);
\filldraw [black] (-5.7,0) circle (2pt)node[anchor=south]{$ t =O(n_1)$};
\node[draw] at (-7,0.5) {$U_1$};
\node[draw] at (-4.4,0.5) {$U_2$};
\node[draw] at (-5.7,-0.5) {measure};
\filldraw [black] (-3.1,0) circle(2pt)node[anchor=south]{$t =O(n_2)$};
\node[draw] at (-3.1,-0.5) {measure};
\filldraw [black] (-7.5,0) circle (2pt)node[anchor=north]{$\rho_0$};
\node[draw] at (-1.9,0.5) {$\dots$};
\node[draw] at (-1.1,0.5) {$U_{d}$};
\filldraw [black] (0.1,0) circle (2pt)node[anchor=south]{$t =O(n_d)$};
\node[draw] at (0.1,-0.5) {measure};
\end{tikzpicture}
    \caption{\textbf{The timeline of quantum evolution:} Here, the black points represent a measurement at time $t$ and the line connecting the points shows unitary evolution.}
    \label{evolution}
\end{figure}
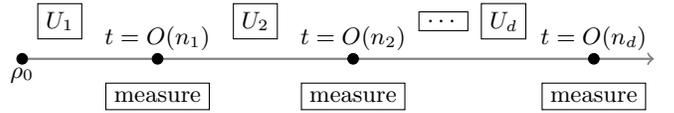
\vspace{-\baselineskip} Using propositions \ref{fact1} and \ref{fact2}, we achieve threshold mixing on each coordinate by repeating the above process a constant number of times. The state of this computation at any stage is a product state across coordinates, i.e., $\rho_1 \otimes \rho_2 \otimes \dots \otimes \rho_d$. As in Richter's analysis~\cite{PR1}, by repeating the above computation $\log(d/\epsilon)$ times, the $\ell$-th component is within $\epsilon/d$ of uniform distribution on $\mathbb{Z}_{n_l}$ for $1 \leq \ell \leq d$. This means that, in the end, the probability distribution $P'_\ell$ produced on $\mathbb{Z}_{n_l}$ in the $\ell$-th component satisfies
$\frac{1}{2}\parallel P'_{\ell} - u_\ell  \parallel_1 \leq \epsilon/d$, where $u_\ell$ is the uniform distribution with each entry on the diagonal $\frac{1}{n_l}$. Thus, the final state after time $T = O(\sum_j n_j \log{(d/\epsilon)})$ has the form $\rho_T = \rho^T_1 \otimes \rho^T_2 \otimes \dots \otimes \rho^T_d$, where each $\rho^T_\ell$ is a diagonal state wrt the standard basis, and within $2\epsilon/d$ of $I_{n_\ell}/n_\ell$. It follows that the distribution on the states given by $\rho_T$ is within $\epsilon$ of the maximally mixed state on $\mathbb{Z}_{n_1} \times \mathbb{Z}_{n_2} \times \dots \times \mathbb{Z}_{n_d}$.
\end{proof}

We consider a particular case when our result takes only extra $\log{d}$ time than~\cite{PR1} when $\mathbb{Z}_{n_1} \times \mathbb{Z}_{n_2} \times \dots \times \mathbb{Z}_{n_d}$ has $n_i$ for $i\geq 2$ of order $n_1$. Let us consider $n_i = \frac{n_1}{i}$ for all $i\geq 2$. Then using Theorem \ref{thm1}, we get the mixing time 

\begin{equation}
\begin{split}
   \Big(\frac{2}{3}\Big(\sum_{i=1}^{d} n_i \Big) \log{(d/\epsilon)}\Big) & = \Big(\frac{2}{3}\Big(\sum_{i=1}^{d} \frac{n_1}{i} \Big) \log{(d/\epsilon)}\Big)\\
   & = \Big(\frac{2}{3}\Big(n_1 \sum_{i=1}^{d} \frac{1}{i} \Big) \log{(d/\epsilon)}\Big)\\
   & = \Big(\frac{2}{3}\Big(n_1 H_d\Big) \log{(d/\epsilon)}\Big)\\
   & \leq \Big(\frac{2}{3} n_1(\log{(d)}+1)\log{(d/\epsilon)}\Big),
\end{split}
\end{equation}

where $H_d = \sum_{i=1}^{d} \frac{1}{i}$ is a Harmonic number function  is upper bounded by $\log{(d)}+1$. The mixing time in this special case is of order $O\left(n_1 \log{(d)} \log{(d/\epsilon)}\right)$ whereas in in Richter's case it is $O(nd \log{d}\log{(1/\epsilon)})$.

\noindent {\bf {\em Quantum walks with repeated measurements.---}}\label{repeatedq} 
In this section, we analyse the walk on $\mathbb{Z}_{n_1} \times \mathbb{Z}_{n_2}$ and subsequently state the conjecture on the trigonometric sum. We then give a proof for mixing time on $\mathbb{Z}_{n_1} \times \mathbb{Z}_{n_2}$ and end this section with a hypothesis for the general case $\mathcal{L}$. We describe the continuous time quantum walk algorithm for $\mathcal{L}$ in the appendix~\ref{qalgo}. We provide a discussion about quantum walks on $\mathbb{Z}_n$ in appendix~\ref{zn} that helps to do the direct analysis on $\mathcal{L}$.       

Let us assume $n_1$ and $n_2$ are \emph{relatively prime and odd}. The simple classical random walk on $\mathbb{Z}_{n_1} \times \mathbb{Z}_{n_2}$ is given by the transition matrix
\begin{equation}
    \Bar{A}' = \frac{1}{2}(\Bar{A_1} \otimes I_{2} +I_{1} \otimes \Bar{A_2} ),
\end{equation}
where $\Bar{A_1}$ is the adjacency matrix of the graph on $\mathbb{Z}_{n_1}$
and $\Bar{A_2}$ is the adjacency matrix of the graph on $\mathbb{Z}_{n_2}$
(see appendix~\ref{zn}, Eq.~(\ref{Eq;Adj})).
The quantum walk operator for time $t$ is 
\begin{align}
  e^{i\Bar{A'}t} & = e^{\frac{it}{2}(\Bar{A_1} \otimes I_{2} +I_{1} \otimes \Bar{A_2} )},\\
  & = e^{\frac{it}{2}\Bar{A_1}} \otimes e^{\frac{it}{2}\Bar{A_2}}.
\end{align}

Suppose we start the walk $\ket{0} = \ket{0_1}\ket{0_2}$, evolve it for a time $t$ chosen uniformly from $[0,T]$. We are interested in the probability that the final state is $\ket{l} = \ket{l_1}\ket{l_2}$, that is,

\begin{equation}\label{probl}
 \begin{aligned}
    P_T(0, l) & = \frac{1}{T} \int_{0}^{T} |\bra{l}e^{i\Bar{A'}t}\ket{0}|^2 dt,\\
    & = \frac{1}{T} \int_{0}^{T} |\bra{l_1}e^{i\Bar{A}_1t/2}\ket{0_1}|^2|\bra{l_2}e^{i\Bar{A}_2t/2}\ket{0_2}|^2 dt,\\
    & = \frac{1}{T} \int_{0}^{T} \frac{1}{(n_1n_2)^2}\Bigg|\sum_{j_1 = 0}^{n_1 -1} e^{\frac{it}{2} \cos(2 \pi j_1/n_1)} w_{1}^{l_1j_1}\Bigg|^2\\\
   & \times \Bigg|\sum_{j_2 = 0}^{n_2 -1} e^{\frac{it}{2} \cos(2 \pi j_2/n_2)} w_{2}^{l_2j_2}\Bigg|^2 dt,
\end{aligned}   
\end{equation}
where $w_1 = e^{2\pi i/n_1}$ and $w_2 = e^{2\pi i/n_2}$. Using formulation in appendix~(\ref{zn}), and definition of $n_1(t)$, $n_2(t)$, $n_1 {\cal{I}}\{l_1=0\}-1$, and $n_2 {\cal{I}}\{l_1=0\}-1$  we write $P_T(0, l)$ as follows.

\begin{equation}\label{probl2}
    \begin{aligned}
& = \frac{1}{T} \int_{0}^{T} \frac{1}{(n_1n_2)^2} \Big[
\Big( n_1 + n_1 {\cal{I}}\{l_1=0\}-1 + n_1(t)\\
& \times \Big( n_2 + n_2 {\cal{I}}\{l_2=0\}-1 + n_2(t)\Big) \Big] dt .
\end{aligned}
\end{equation}

To prove the mixing time bound on $\mathbb{Z}_{n_1} \times \mathbb{Z}_{n_2}$ in section~\ref{repeatedq}, we need the following lemma.

\begin{lemma}\label{lemma2}
  \begin{equation}\label{case:n1}
  \left|\int_{0}^{T} n_1(t) dt \right|\leq 32(n_1 \log(n_1))^2,
\end{equation}

\begin{equation}\label{case:n2}
   \left|\int_{0}^{T} n_2(t) dt\right| \leq  32 (n_2 \log(n_2))^2.
\end{equation}  
\end{lemma}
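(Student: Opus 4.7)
The plan is to expand $n_k(t)$ explicitly as a finite sum of pure exponentials in $t$, integrate term-by-term using the elementary bound $\left|\int_0^T e^{i\omega t}dt\right| \leq 2/|\omega|$, and then show that the resulting sum of reciprocals of cosine-differences decouples into two independent harmonic sums via a linear change of variables.

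Concretely, expanding
$$\left|\sum_{j=0}^{n_k-1} e^{\frac{it}{2}\cos(2\pi j/n_k)} w_k^{l_k j}\right|^2 = \sum_{j_1,j_2=0}^{n_k-1} e^{\frac{it}{2}(\cos(2\pi j_1/n_k) - \cos(2\pi j_2/n_k))} w_k^{l_k(j_1-j_2)},$$
the zero-frequency contributions come from $j_1 = j_2$ (giving $n_k$) and from the cosine-degeneracy $j_2 \equiv -j_1\pmod{n_k}$ (giving $n_k\mathcal{I}\{l_k=0\} - 1$, using that $n_k$ is odd so the nontrivial characters sum to $-1$). These account exactly for the non-oscillatory terms in Eq.~(\ref{probl2}), so that $n_k(t)$ is the sum over the remaining non-degenerate pairs. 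Applying the $|\int e^{i\omega t}dt|\leq 2/|\omega|$ bound to each term and invoking the product-to-sum identity $\cos a - \cos b = -2\sin(\frac{a+b}{2})\sin(\frac{a-b}{2})$ gives
$$\left|\int_0^T n_k(t)\,dt\right| \leq \sum_{\text{non-deg}} \frac{2}{\left|\sin(\pi(j_1+j_2)/n_k)\right| \left|\sin(\pi(j_1-j_2)/n_k)\right|}.$$

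Next I would apply the standard estimate $|\sin(\pi m/n_k)| \geq 2\|m\|_{n_k}/n_k$, where $\|m\|_{n_k}$ denotes distance to the nearest multiple of $n_k$, yielding the upper bound $\frac{n_k^2}{2}\sum_{\text{non-deg}} 1/(\|j_1+j_2\|_{n_k}\|j_1-j_2\|_{n_k})$. Because $n_k$ is odd, $2$ is invertible modulo $n_k$ and the map $(j_1,j_2) \mapsto (u,v) = ((j_1+j_2)\bmod n_k,(j_1-j_2)\bmod n_k)$ is a bijection on $\mathbb{Z}_{n_k}^2$; moreover the non-degeneracy conditions translate precisely into $u\neq 0$ and $v\neq 0$. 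The sum therefore decouples as
$$\left(\sum_{u=1}^{n_k-1}\frac{1}{\|u\|_{n_k}}\right)^2 = \bigl(2 H_{\lfloor n_k/2\rfloor}\bigr)^2 \leq (2\log n_k + 2)^2,$$
which when multiplied by $n_k^2/2$ is comfortably below $32(n_k\log n_k)^2$ and proves both (\ref{case:n1}) and (\ref{case:n2}).

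The main obstacle I anticipate is identifying $n_k(t)$ cleanly, i.e.\ certifying that after subtracting the $n_k + n_k\mathcal{I}\{l_k=0\} - 1$ piece every remaining term in the expansion truly has nonzero frequency. This relies on the fact that for odd $n_k$ the only solutions to $\cos(2\pi j_1/n_k)=\cos(2\pi j_2/n_k)$ are $j_2 \in \{j_1, n_k - j_1\}$, and the same oddness hypothesis makes the $(u,v)$-change of variables bijective. If the $n_i$ were even one would have to track additional degeneracies at $j = n_k/2$ and handle non-invertibility of $2$ modulo $n_k$; under the stated relatively prime and odd assumption, neither issue arises.
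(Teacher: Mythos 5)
Your proposal is correct and follows essentially the same route as the paper's proof: integrate each oscillatory term and bound it by $2/|\text{frequency}|$, use the product-to-sum identity to write the frequency as $-2\sin(\pi(j_1+j_2)/n_k)\sin(\pi(j_1-j_2)/n_k)$, decouple the double sum via the $(j_1+j_2,\,j_1-j_2)$ change of variables, and estimate the resulting one-dimensional sums of $1/|\sin(\pi y/n_k)|$. The only difference is cosmetic: you bound that sum by $|\sin(\pi m/n_k)|\geq 2\|m\|_{n_k}/n_k$ and harmonic numbers (working modulo $n_k$ and using oddness for bijectivity), whereas the paper maps into an integer rectangle and splits the sum at $y=(n_1-1)/1000$; both land within the stated constant $32$.
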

\begin{proof}
 Please refer appendix~\ref{lemma2app}.    
\end{proof} 
This lemma gives the bound on the integration of single time-dependent terms. The bound on integral of multiplication of such terms is conjectured in the subsequent section.  

\noindent {\bf {\em Conjecture on trigonometric sum.---}}
To prove quantum walk mixing on $\mathcal{L}$, we present a conjecture on the value of a trigonometric sum that arises in our analysis of the quantum mixing time for the general case $d$. We consider the following equation
\begin{equation}\label{conj}
   \left|\int_{0}^{T} n_1(t) n_2(t) \dots n_d(t) dt\right|,
\end{equation}

To prove an upper bound on the mixing time of the quantum walk on $\mathcal{L}$, we propose the following conjecture for the general case: 

\begin{conjecture}\label{conj1}
For $n_1 > n_2 > \dots > n_d$, where $n_1, n_2, \dots, n_d$ are odd and relatively prime, we have

\begin{equation}
\begin{aligned}
    &\left|\int_0^T \prod_{i=1}^d n_i(t)dt \right|\leq 16 d \sum_{j= 1}^{d} \left(\prod_{i \neq j}^d n_i(t)\right)  (n_j\log(n_j))^2.
\end{aligned}    
\end{equation}
\end{conjecture}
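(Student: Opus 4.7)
The plan is to proceed by induction on $d$, with Lemma~\ref{lemma2} serving as the base case $d=1$. For the inductive step I would expand the product $\prod_{i=1}^d n_i(t)$ as a multiple trigonometric sum over off-diagonal index pairs $(j_i,k_i)$ with $j_i \neq k_i$ drawn from each cycle factor, so that each term carries an aggregate phase
$$\Omega(\mathbf{j},\mathbf{k}) \;=\; \frac{1}{2}\sum_{i=1}^d \bigl(\cos(2\pi j_i/n_i) - \cos(2\pi k_i/n_i)\bigr).$$
Integrating term by term over $[0,T]$ gives $\int_0^T e^{it\Omega}dt \leq \min(T,\,2/|\Omega|)$, reducing the estimate to two tasks: (a) counting the stationary multi-indices for which $\Omega=0$, and (b) summing $1/|\Omega|$ over the oscillatory contribution.

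For task (a) the coprimality and oddness hypotheses on the $n_i$ are exactly what force the only solutions of $\Omega=0$ to be the trivial conjugate-pair cancellations $k_i = -j_i \bmod n_i$ within each coordinate; any nontrivial rational relation among the $\cos(2\pi j_i/n_i)$ across different coprime moduli is ruled out by standard minimal-polynomial arguments for cyclotomic cosines. This stationary mass contributes a term proportional to $\prod_{i\neq j} n_i$ when a single coordinate $j$ is singled out, matching the structure of the right-hand side of the conjecture. For task (b), the one-coordinate spacing bound $|\cos(2\pi j/n_i)-\cos(2\pi k/n_i)| \geq c/n_i^2$ for $j\neq k$, combined with summation over the $O(n_i^2)$ index pairs, produces an $(n_i\log n_i)^2$ factor per coordinate, exactly as in the proof of Lemma~\ref{lemma2}.

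The main obstacle is controlling the combined denominator $1/|\Omega|$: cancellations across coordinates can make $|\Omega|$ much smaller than any single frequency. My strategy is to decompose the sum dyadically according to which coordinate $j \in \{1,\dots,d\}$ carries the largest individual frequency $|\cos(2\pi j_j/n_j)-\cos(2\pi k_j/n_j)|$. After fixing the dominant coordinate, the contributions of the other coordinates are absorbed as a bounded perturbation that can be trivially bounded by $\prod_{i\neq j} n_i$, while Lemma~\ref{lemma2} is applied in the distinguished $j$-th coordinate to produce the $(n_j\log n_j)^2$ factor. The union bound over the choice of dominant coordinate then yields the factor of $d$ and the summation $\sum_{j=1}^d$ on the right-hand side.

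A secondary difficulty is the small-denominator behaviour near the spectral endpoints $j \approx 0,\, n_i/2$, where the spacing of $\{\cos(2\pi j/n_i)\}$ degenerates to order $1/n_i$ rather than $1/n_i^2$. I would isolate a thin boundary layer and treat it by a Van der Corput second-derivative estimate, as is done implicitly in the proof of Lemma~\ref{lemma2} in appendix~\ref{lemma2app}. The part most likely to require genuinely new work is verifying that this boundary contribution stays within the $(n_j\log n_j)^2$ budget uniformly in $d$, without introducing extra polynomial factors; this is where the conjecture might realistically fail for large $d$, and is probably the reason the authors state it only as a conjecture.
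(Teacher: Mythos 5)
First, a point of calibration: the paper does not prove Conjecture~\ref{conj1} at all. It is stated as a conjecture, supported only by the numerical experiments in Appendix~\ref{simul}, and even the $d=2$ instance (Eq.~\ref{d2}) is \emph{assumed} in the proof of Theorem~\ref{thm3}. So there is no paper proof to compare against; your proposal is an attempt at something the authors left open. (Also note the right-hand side of the conjecture as printed contains $\prod_{i\neq j} n_i(t)$, which must be a typo for the integers $\prod_{i\neq j} n_i$, as Eq.~\ref{d2} confirms; you implicitly read it that way, which is correct.)

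That said, your plan has a quantifiable gap that kills it before the delicate number theory even enters. After expanding $\prod_i n_i(t)$ into exponentials with combined frequency $\Omega(\mathbf{j},\mathbf{k})$ and applying the triangle inequality term by term with $\bigl|\int_0^T e^{it\Omega}\,dt\bigr| \le \min(T, 2/|\Omega|)$, every one of the roughly $\prod_i n_i^2/2^d$ multi-index terms contributes at least $2/|\Omega| \ge 2/d$ (since $|\Omega|\le d$), so the best bound this route can ever produce is of order $\prod_i n_i^2$. The conjectured right-hand side is of order $\bigl(\prod_i n_i\bigr)\sum_j n_j(\log n_j)^2$, which is smaller by a factor of roughly $\prod_i n_i / \sum_j n_j(\log n_j)^2$ --- already for $d=2$ with $n_1\approx n_2=n$ large, you would need to beat $n^4$ down to $n^3\log^2 n$. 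Term-by-term absolute values cannot do this; any proof must exploit cancellation \emph{among} the $(\mathbf{j},\mathbf{k})$ terms after the $t$-integration (the phases $w_1^{l_1(j_1-k_1)}w_2^{l_2(j_2-k_2)}$ and the signs of $\sin(\Omega T/2)/\Omega$), which is a second-moment or exponential-sum estimate of a fundamentally different character. This is why the $d=1$ case (Lemma~\ref{lemma2}) goes through --- there $(n\log n)^2$ comfortably exceeds the $\sim n^2/2$ term count --- but the same bookkeeping fails for $d\ge 2$. Your fallback, the ``dominant coordinate'' decomposition, does not rescue this: Lemma~\ref{lemma2} controls the \emph{signed} integral $\bigl|\int_0^T n_j(t)\,dt\bigr|$ by integrating each oscillatory term exactly, and that mechanism is destroyed once you multiply by the time-dependent factor $\prod_{i\neq j} n_i(t)$; to ``absorb'' that factor you would need either $\int_0^T |n_j(t)|\,dt$ (which is of order $T n_j$, far too large) or a pointwise bound $|n_i(t)|\le n_i$, which is false --- the trivial bound is $O(n_i^2)$, the number of off-diagonal pairs. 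Finally, your claim that $\Omega=0$ forces coordinatewise conjugate cancellation is not a ``standard minimal-polynomial argument'': it is an instance of the vanishing-sums-of-roots-of-unity problem (Conway--Jones/Mann), and while the coprime odd moduli make it plausible, it requires a real proof. In short: the stationary-phase skeleton is the natural first attempt, but the counting obstruction in the second step shows it cannot reach the conjectured bound, and this is very likely why the statement remains a conjecture.
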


In the proof of Theorem \ref{thm3} we use specific case $d=2$ of the conjecture, i.e., for $\mathbb{Z}_{n_1}\times \mathbb{Z}_{n_2}$:

\begin{equation}\label{d2}
  \left| \int_{0}^{T} n_1(t) n_2(t) dt \right|\leq 32 n_1(n_2 \log(n_2))^2+ 32 n_2(n_1 \log(n_1))^2. 
\end{equation}

To support our conjecture, we carried out numerical simulations. The details are explained in appendix~\ref{simul}.   

\begin{theorem}\label{thm3}
For time T of order $O((n_1+n_2)(\log{(n_1)})^2)$,$T' = O\left(\log{(1/\epsilon)}\right)$ repeated continuous time quantum walk on the graph $\mathbb{Z}_{n_1} \times \mathbb{Z}_{n_2}$ ($n_1 > n_2$, $n_2 > 91$, $n_1, n_2$ are odd and relatively prime) mixes to the uniform distribution.
\end{theorem}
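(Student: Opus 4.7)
The plan is to combine the time-averaged probability formula from Eq.~(\ref{probl2}) with the bounds in Lemma~\ref{lemma2} and the $d=2$ case~(\ref{d2}) of Conjecture~\ref{conj1} to show that after a single run of length $T=O((n_1+n_2)(\log n_1)^2)$ the column distance $d(P_T)$ is already bounded by a constant $\alpha<1$, and then to use $T'=O(\log(1/\epsilon))$ rounds of evolve-then-measure together with the submultiplicativity in Eq.~(\ref{submu}) to drive the distance below $\epsilon$. Translation invariance of the Cayley graph of $\mathbb{Z}_{n_1}\times\mathbb{Z}_{n_2}$ lets me assume the initial vertex is $0$, so it suffices to lower-bound $P_T(0,l)$ uniformly in $l$.

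The first step is to expand the integrand in Eq.~(\ref{probl2}) as $(a_1+n_1(t))(a_2+n_2(t))$ with $a_i:=n_i+n_i\mathcal{I}\{l_i=0\}-1$ and integrate term by term, splitting $P_T(0,l)$ into a time-independent main term $a_1 a_2/(n_1 n_2)^2$ and three time-dependent pieces. Lemma~\ref{lemma2} controls the two single-factor integrals and the bound~(\ref{d2}) controls the mixed one, yielding
\begin{equation*}
\left|P_T(0,l)-\frac{a_1 a_2}{(n_1n_2)^2}\right| \leq \frac{128(\log n_1)^2(n_1+n_2)}{T\, n_1 n_2}.
\end{equation*}
Choosing $T=C(n_1+n_2)(\log n_1)^2$ with a sufficiently large absolute constant $C$ makes this error at most $\frac{1}{2 n_1 n_2}$. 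For indices $l$ with $l_1\neq 0$ and $l_2\neq 0$ the main term equals $(1-1/n_1)(1-1/n_2)/(n_1 n_2)$, which for $n_2>91$ and $n_1>n_2$ is at least $0.978/(n_1 n_2)$; hence $P_T(0,l)\geq \gamma/(n_1 n_2)$ for a positive absolute constant $\gamma$.

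The fraction of such good targets is $\beta=(1-1/n_1)(1-1/n_2)>1/2$ in the same regime, so Proposition~\ref{fact2} applied to the $N=n_1 n_2$ columns gives $d(P_T)\leq 1-\gamma(2\beta-1)=:\alpha<1$, with $\alpha$ independent of $n_1,n_2,\epsilon$. Iterating the measured walk $T'=\lceil \log(1/\epsilon)/\log(1/\alpha)\rceil=O(\log(1/\epsilon))$ times, Eq.~(\ref{submu}) gives $d(P_T^{T'})\leq \alpha^{T'}\leq \epsilon$, hence $\frac{1}{2}\parallel P-\pi 1^\dagger\parallel_1\leq \epsilon$, and the total runtime is $T\cdot T' = O((n_1+n_2)(\log n_1)^2\log(1/\epsilon))$.

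The principal obstacle is the estimate of the mixed integral of $n_1(t)n_2(t)$, which is exactly the statement of Eq.~(\ref{d2}); without that bound the cross term could dominate the main term and destroy the lower bound on $P_T(0,l)$. A secondary technical point is the numerical threshold $n_2>91$: it is the smallest cutoff that simultaneously guarantees $\beta>1/2$ in Proposition~\ref{fact2} and leaves a usable positive $\gamma$ after absorbing the $1/n_i$ corrections to $(1-1/n_1)(1-1/n_2)$ against the error $1/(2n_1 n_2)$. Tracking these constants is routine once the main error estimate is in hand, so the crux of the argument is really the interplay between the Lemma~\ref{lemma2} bounds, the conjectured bound~(\ref{d2}), and the counting argument feeding into Proposition~\ref{fact2}.
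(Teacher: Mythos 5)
Your proposal is correct and rests on exactly the same core ingredients as the paper's proof --- the expansion of $P_T(0,l)$ from Eq.~(\ref{probl2}) into a time-independent main term plus three integral remainders, Lemma~\ref{lemma2} for the single-factor integrals, the conjectured bound~(\ref{d2}) for the cross term, and submultiplicativity~(\ref{submu}) to convert a constant one-round bound into $T'=O(\log(1/\epsilon))$ rounds --- but your final step is genuinely different. The paper bounds $d(P_T)$ by directly computing $\parallel P_T(:,1)-\mathcal{u}(:,1)\parallel_1$, splitting the $n_1n_2$ target indices into four cases according to whether $l_1$ and/or $l_2$ vanish, multiplying each per-entry bound by the number of entries of that type, and summing to get $\frac{13}{n_2}+\frac{2}{50}\leq\frac{1}{2e}$ (which is where the threshold $n_2>91$ actually comes from). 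You instead only need the generic entries ($l_1\neq 0$, $l_2\neq 0$): a lower bound $P_T(0,l)\geq\gamma/(n_1n_2)$ on a $\beta$-fraction of each column feeds Proposition~\ref{fact2} and yields $d(P_T)\leq 1-\gamma(2\beta-1)<1$. This buys you a shorter argument with no case analysis, and in fact it works for much smaller $n_2$ than $91$ (your stated rationale for that cutoff is not quite right --- $\beta>1/2$ already holds for $n_2\geq 3$; the paper needs $n_2>91$ only to push its explicit sum below $1/2e$); the price is a weaker constant $\alpha$, which only changes the constant inside $T'=O(\log(1/\epsilon))$. Your arithmetic checks out (the error term works out to $\leq 96(\log n_1)^2(n_1+n_2)/(Tn_1n_2)$ with $a_i\leq 2n_i$, safely under your $128$), and you correctly flag that both your argument and the paper's are conditional on the unproven bound~(\ref{d2}), which is supported only by the numerics of Conjecture~\ref{conj1}.
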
 

\begin{proof}

The quantum walk algorithm discussed in appendix~\ref{qalgo} suggests that we want to minimize $T T'$ such that, 

\begin{equation}\label{bdp2}
\frac{1}{2}\parallel (P_{T})^{T'} - \mathcal{u} \parallel_{1} \leq d((P_{T})^{T'}) \leq \epsilon,
\end{equation}

where $\mathcal{u} = \frac{1}{n_1 n_2}\mathbf{J}$ and $\mathbf{J}$ is all ones matrix. From Ref.~\cite{Richter_2007} we have 
\begin{equation}\label{bdp}
    \frac{1}{2}\parallel P_{T} - \mathcal{u}\parallel_{1} = \max_{j}\parallel P_T(:,j) - \mathcal{u}(:,j) \parallel_{1}.  
\end{equation}
Note that since the Adjacency matrix $\Bar{A'}$ is symmetric so $P_T(x,y) = P_T(y,x)$. Hence, each column of $P_T$ has the same probability distribution up to permutation, and all columns of $\mathcal{u}$ are the same, respectively. Then \emph{w.l.o.g.} we choose the first column of $P_T$ and $\mathcal{u}$ and rewrite Eq.~\ref{bdp} as

\begin{equation}
 \frac{1}{2}\parallel P_{T} - \mathcal{u}\parallel_{1} = \parallel P_T(:,1) - \mathcal{u}(:,1) \parallel_{1} = d(P_T).   
\end{equation} 
Using Eq.~\ref{probl} and Eq.~\ref{probl2} we write $\parallel P_T(:,1) - \mathcal{u}(:,1) \parallel_{1}$ as follows: 
\begin{equation}
 \begin{aligned}
    &\parallel P_{T}(:,1) - \mathcal{u}(:,1)\parallel_{1}  \\
    &=\sum_{l}\Bigg|\frac{1}{T} \int_{0}^{T} \frac{1}{(n_1n_2)^2} \Bigg( n_1 (n_2 {\cal{I}}\{l_2=0\}-1) + n_1 n_2(t)\\
    &+ n_2 (n_1 {\cal{I}}\{l_1=0\}-1)+ (n_1 {\cal{I}}\{l_1=0\}-1) \\
    & \times(n_2 {\cal{I}}\{l_2=0\}-1)+ (n_1{\cal{I}}\{l_1=0\}-1) n_2(t) \\
    &+ n_2 n_1(t) +(n_2{\cal{I}}\{l_2=0\}-1) n_1(t) + n_1(t) n_2(t)  \Bigg)dt \Bigg|.
\end{aligned}   
\end{equation}

Now using lemma~\ref{lemma2} and Eq.~\ref{d2} we prove the bound in Eq.~\ref{bdp2}. Note that column $P_T(:,1)$ means starting vertex is $0$ and $l = l_1 l_2$ denotes row indices where $0 \leq l \leq n_1n_2-1$. We make cases based on values of $l_1$ and $l_2$ and take time  $T = 1600(n_1+n_2)(\log{n_1})^2$ for all cases. The detailed analysis of the following bounds is given in appendix~\ref{bounds}. 

\textbf{ case 0: $l_1 = 0$, $l_2 = 0$ } 

\begin{equation}
\left|P_{T}(0,0) - \frac{1}{n_1n_2}\right| \leq \frac{4}{n_2^2}.     
\end{equation}

\textbf{case 1 : $l_1 = 0$ and $l_2 \neq 0$} (There are $(n_2 -1)$ such terms.)

\begin{equation}
 n_2\left|P_T(0,l) - \frac{1}{n_1n_2}\right| \leq \frac{3}{n_2}.    
\end{equation}

\textbf{case 2 : $l_1 \neq 0$ and $l_2 = 0$} (There are $(n_1 -1)$ such terms.)

\begin{equation}
  n_1\left|P_T(0,l) - \frac{1}{n_1n_2}\right| \leq \frac{3}{n_2}.  
\end{equation}

\textbf{case 3 : $l_1 \neq 0$ and $l_2 \neq 0$} (There are $(n_1 n_2 -n_1- n_2 + 1)$ such terms.) 

\begin{equation}
  n_1 n_2 \left|P_T(0,l) - \frac{1}{n_1n_2}\right| \leq \frac{3}{n_2}+ \frac{2}{50}.   
\end{equation}

The bounds combined from all the above cases give us the following.  
\begin{equation}
\begin{aligned}
    \parallel P_T(:,1) - \mathcal{u}(:,1)\parallel_1 & \leq \frac{4}{n_2^2}+ \frac{3}{n_2} + \frac{3}{n_2} + \frac{3}{n_2}+ \frac{2}{50},\\
    & \leq \frac{13}{n_2} + \frac{2}{50}.
\end{aligned}    
\end{equation}

This implies 

\begin{equation}
    d(P_T) \leq \frac{1}{2e},
\end{equation}

for $n_2 > 91$. Now we  want $d(P_{T})^{T'} \leq \epsilon$. Then by submultiplicative property of Markov chain we know  $d((P_{T})^{T'}) \leq d(P_{T})^{T'}$ and it gives us

\begin{equation}
\begin{aligned}
    \implies & \left(\frac{1}{2e}\right)^{T'} < \epsilon\\
    & T' \geq \log_{2e}{(1/\epsilon)}.
\end{aligned}   
\end{equation}
   
\end{proof}

In the end, based on the work for $d = 2$ case, we conjecture that for that any $d > 2$, the continuous time quantum walk takes $O(\sum_{i = 1}^{d} n_i (\log{(n_1)})^2\log{(1/\epsilon)})$  time to mix on $\mathcal{L}$ with $O(\log{(1/\epsilon)})$ measurements.

\begin{conjecture}\label{general}
 For time of order $O(\sum_{i = 1}^{d} n_i (\log{(n_1)})^2\log{(1/\epsilon)})$ repeated continuous time quantum walk on the graph $\mathbb{Z}_{n_1} \times \mathbb{Z}_{n_2} \times \dots \times \mathbb{Z}_{n_d}$ ($n_1 > n_2 > \dots > n_d$ odd and relatively prime) mixes to the uniform distribution.   
\end{conjecture}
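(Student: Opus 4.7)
The plan is to promote the analysis in Theorem~\ref{thm3} from $d=2$ to arbitrary $d$, replacing Eq.~(\ref{d2}) by the full Conjecture~\ref{conj1}. First I would factor the unitary as $e^{it\bar{A}'} = \bigotimes_{i=1}^{d} e^{it \bar{A}_i/d}$, so that the time-averaged probability starting at $0$ and ending at $l = (l_1,\ldots,l_d)$ factorizes coordinate-wise,
\begin{equation}
    P_T(0,l) = \frac{1}{T}\int_0^T \prod_{i=1}^d |\bra{l_i} e^{it\bar{A}_i/d}\ket{0_i}|^2 \, dt.
\end{equation}
Following appendix~\ref{zn}, each factor equals $\tfrac{1}{n_i^2}\bigl(n_i + a_i(l_i) + n_i(t)\bigr)$ with $a_i(l_i) := n_i\,\mathcal{I}\{l_i = 0\} - 1$ and $n_i(t)$ the oscillatory trigonometric sum for the $i$-th cycle.

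Second I would expand $\prod_i (n_i + a_i + n_i(t))$ into $3^d$ terms, one for each coordinate-wise selection of $n_i$, $a_i$, or $n_i(t)$. The single ``all-$n_i$'' term, divided by $\prod_i n_i^2$, produces exactly the uniform value $1/N$ and cancels when forming $P_T(0,l) - 1/N$. Each remaining summand with oscillatory support $S \neq \emptyset$ contributes $\frac{1}{T}\int_0^T \prod_{i \in S} n_i(t)\, dt$, which by Conjecture~\ref{conj1} is bounded in absolute value by $\frac{16|S|}{T}\sum_{j \in S}\bigl(\prod_{i \in S,\, i \neq j} n_i\bigr)(n_j \log n_j)^2$; the purely constant-in-$t$ summands (those with $S = \emptyset$ but not all $n_i$) contribute cleanly controlled products of $n_i$'s and $a_i$'s.

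Third I would bound $\tfrac{1}{2}\|P_T(:,0) - u\|_1 = \sum_l |P_T(0,l) - 1/N|$ by organising $l$ according to its zero-set $Z = \{i : l_i = 0\}$: there are $\prod_{i \notin Z}(n_i - 1)$ vectors with that $Z$, and $a_i = n_i - 1$ for $i \in Z$ while $a_i = -1$ for $i \notin Z$. Summing the constant-in-$t$ terms over $l$ with fixed $Z$ produces the same telescoping cancellations used in the $d=2$ case, while the oscillatory contributions are summed using the estimate above. The dominant remainder comes from $|S| = 1$, giving a bound of order $\frac{1}{T}\sum_{j=1}^d (n_j \log n_j)^2$ up to $d$-dependent prefactors; choosing $T = C\bigl(\sum_i n_i\bigr)(\log n_1)^2$ for a suitable $C$ forces $d(P_T) \leq 1/(2e)$, and submultiplicativity (Eq.~\ref{submu}) together with Proposition~\ref{fact1} upgrades this to $d((P_T)^{T'}) \leq \epsilon$ after $T' = O(\log(1/\epsilon))$ repetitions, giving the claimed bound.

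The main obstacle is this third step: the combinatorial sum over $2^d$ zero-sets $Z$ and $3^d$ summand types requires delicate control so that the $d$-dependent prefactors do not swamp the $(\log n_1)^2$ savings. In particular, one must verify that terms with $|S|$ large and $|Z|$ small — where the count-factor $\prod_{i \notin Z}(n_i - 1)$ is maximal and Conjecture~\ref{conj1}'s bound is weakest — are actually controlled, most likely by exploiting additional cancellations between the $a_i = -1$ factors at the unfixed coordinates. A secondary issue is ensuring that the $16d$ prefactor inherited from Conjecture~\ref{conj1} does not degrade the overall scaling; a calibration simulation analogous to appendix~\ref{simul} but for $d \geq 3$ would likely be needed to sharpen those constants before the bookkeeping closes cleanly.
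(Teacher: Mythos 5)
The statement you are asked to prove is labelled a \emph{conjecture} in the paper: the authors give no proof of it, only the $d=2$ case (Theorem~\ref{thm3}), which itself is conditional on the unproven trigonometric-sum bound. Your plan is the natural generalization of that $d=2$ argument and clearly matches what the authors intend, but as written it is a programme rather than a proof, and the two places where it is incomplete are exactly the places where the claim could fail. First, everything rests on Conjecture~\ref{conj1}, which the paper supports only by numerics for $d=2$; moreover you invoke it for arbitrary nonempty subsets $S$ of coordinates, which is a strictly stronger statement than the all-$d$ product the conjecture actually asserts (and note the conjecture's right-hand side as printed contains a spurious $n_i(t)$ that you have silently, and reasonably, read as $n_i$). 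You would need to state and justify this subset version separately.

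Second, and more seriously, your step three does not close, and a rough accounting suggests the conjectured time bound may not survive your route without modification. Already the single term $S=\{1,\dots,d\}$, bounded via Conjecture~\ref{conj1}, contributes after multiplying by the $\prod_i n_i$ count in the all-nonzero $l$ class a quantity of order
\begin{equation}
\frac{16\,d}{T}\sum_{j=1}^{d} n_j(\log n_j)^2 \;\leq\; \frac{16\,d\,(\log n_1)^2\sum_j n_j}{T},
\end{equation}
which forces $T = \Omega\bigl(d\sum_i n_i(\log n_1)^2\bigr)$ to keep $d(P_T)\leq 1/(2e)$ --- already a factor $d$ above the conjectured $O\bigl(\sum_i n_i(\log n_1)^2\bigr)$. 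Summing over all $3^d$ summand types and $2^d$ zero-sets threatens an exponential-in-$d$ prefactor unless the cancellations you allude to (between the $a_i=-1$ factors) are actually exhibited; you flag this as an obstacle but do not resolve it, and without it the argument does not establish the stated time bound even conditionally on Conjecture~\ref{conj1}. To be a genuine contribution beyond what the paper already contains, you would need either to carry out this bookkeeping and show the $d$-dependence collapses, or to weaken the conjectured time bound by the appropriate $d$-dependent factor.
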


\noindent {\bf {\em Summary and outlook.---}}  
We summarize this study with some important highlights. This work focuses on quantum mixing time and presents two types of quantum walks that differ by the number of measurements required. We provide a version of quantum walk with coordinate-wise mixing and show that it takes $O(\sum_{i = 1}^d n_i \log{(d/\epsilon)})$ time and $O(d \log{(d/\epsilon)})$ measurements. Importantly, we present a direct analysis of the standard quantum walk on $\mathcal{L}$, which achieves a mixing time close to the coordinate-wise mixing time result with only $O(\log{(1/\epsilon)})$ measurements.

 To conclude, we propose a conjecture for the general case of $\mathcal{L}$ and estimate that the expected mixing time is $O(\sum_{i = 1}^{d} n_i (\log{(n_1)})^2\log{(1/\epsilon)})$ with $O(\log{(1/\epsilon)})$ measurements. This conjecture is a step towards proving the general conjecture on quantum walk mixing time on regular graphs. Our work extends the class of classical Markov chain Monte Carlo processes where quantum walks with repeated measurements have a speedup advantage and its potential applications in numerical approximations, computational physics and computational biology. 

 \noindent {\bf {\em Acknowledgements.---}} We thank Professor Jaikumar Radhakrishnan for his guidance and support throughout this project and for hosting Shyam Dhamapurkar at the Tata Institute of Fundamental Research, Mumbai. We gratefully acknowledge discussions with Ashwin Nayak, Oscar Dahlsten, and Saniya Wagh.
This work was supported by the Key-Area Research and Development Program of Guang-Dong Province (Grant No. 2018B030326001), Shenzhen Science and Technology Program (KQTD20200820113010023).

\nocite{*}
\bibliographystyle{apsrev4-1}
\bibliography{apssamp}

\onecolumngrid

\newpage
\appendix

\section{{\em C\MakeLowercase{lassical random walk on $\mathcal{L}$}}}\label{classical}
\begin{center}
The proof of {\bf{}{Theorem \ref{thm1}}} is given below.    
\end{center}
 
\begin{proof}
Without loss of generality, let $n_1$ be the maximum of all $n_i$ for $1 \leq i \leq d$. We construct a coupling $(X_t, Y_t)$ of two walkers to  conduct a lazy random walk on $\mathbb{Z}_{n_1} \times \mathbb{Z}_{n_2} \times \dots \times \mathbb{Z}_{n_d}$. Assume that $X_t$ began at $x$ and $Y_t$ began at $y$. To couple these random walkers, we pick one of the $d$ coordinates at random, and if the two walkers agree there, we shift them both by $+1$, $-1$, or $0$ in that coordinate, with probabilities of $\frac{1}{4}$, $\frac{1}{4}$, and $\frac{1}{2}$, respectively. If they vary in the chosen coordinate, we fix one and move the other $+1$ or $-1$ in that coordinate, with the sign determined by a fair coin flip. Consider $X_t = (X^1_t,\dots, X^d_t)$ and $Y_t = (Y^1_t,\dots,Y^d_t)$, with $\tau_i = min\{t \geq 0 \vert X^i_t = Y^i_t\}$.

We can see that the anticipated number of moves to get $X^i_t$ and $Y^i_t$ to agree on a particular coordinate $i$ is at most $\frac{n_i^2}{4}$. There is a geometric waiting time between moves with $d$ expectation value because the chance of choosing the coordinate $i$ is $1/d$ at each step. This provides us with

\begin{equation}
    E_{x,y}(\tau_i) \leq \frac{d n_{i}^2}{4}.
\end{equation}

For $1\leq i \leq d$, and \[\tau_{couple} = \max_{1\leq i\leq d}\tau_i\]. 
Now the probability that the $i^{th}$ coordinates of two walkers have not yet coupled by time $t$ is less than $\Big(\frac{1}{4}\Big)^{t/n_i^2 d}$ for $1\leq i \leq d$.
So the probability that in none of the coordinates, these two walkers couple by time $t$  is less than \[\sum_{i =1}^{d}\Big(\frac{1}{4}\Big)^{t/n_i^2 d}.\]

Let's assume that the mixing time is $t = 2 n_1^2 d \log{(d/\epsilon)}$ then 

\[\sum_{i =1}^{d}\Big(\frac{1}{4}\Big)^{t/n_i^2 d}  < \epsilon,\]
must be true.  Take $c = \max\{ n_1/n_i |2 \leq i \leq d\}$,

\begin{equation}
    \begin{split}
\sum_{i =1}^{d}\Big(\frac{1}{4}\Big)^{t/n_i^2 d} &= \sum_{i =1}^{d}\Big(\frac{1}{4}\Big)^{\frac{2 n_1^2 d \log{(d/\epsilon)}}{n_i^2 d}}   \\
& \leq \sum_{i =1}^{d}\Big(\frac{1}{4}\Big)^{2 c^2  \log{(d/\epsilon)}}\\
& =\sum_{i =1}^{d}\Big(\frac{1}{16}\Big)^{c^2  \log{(d/\epsilon)}}\\
&  \leq d\Big(\frac{1}{16}\Big)^{c  \log{(d/\epsilon)}}\\
&  = d\Big((\epsilon/d)^{\log{(16)}}\Big)^c\\
&  = \epsilon^{c\log{(16)}} \frac{d}{d^{(c\log{(16)})}}\\
& \leq \epsilon.
\end{split}
\end{equation}

Hence, the proof.
\end{proof}

\section{{\em Q\MakeLowercase{uantum walk probability analysis on $\mathbb{Z}_{n}$}}}\label{zn}

  Richter's analysis of $\mathbb{Z}_{n}$, as presented in~\cite{PR1}, utilizes the asymptotic properties of Bessel functions to establish a linear mixing time that is proportional to the number of vertices. This is achieved by mapping the infinite line onto an $ n-$ cycle. Richter extends the mixing time result on the infinite line due to Childs in (Ref.~\cite{childs2002example}) to a finite cycle.
  
  Using the adjacency matrix, we analyze the behaviour of the continuous time quantum walk on $\mathbb{Z}_{n}$, i.e. $n$-cycle. We then obtain an expression for the probability of transitioning from one vertex to another at time $t$ and break it into time-dependent and time-independent parts. This analysis helps to prove the mixing time on the higher-dimensional case $\mathcal{L}$.

The adjacency matrix of $\mathbb{Z}_n$ is 

\begin{equation}\label{Eq;Adj}
    A = 
    W_{n} + W_{n}^{n-1}, 
\end{equation}

where $W_{n}$ is $n \times n$ a circulant matrix.

 \begin{equation*}
     W_{n} = \begin{bmatrix}
          0 & 1 & 0 & \dots & 0 \\
          0 & 0 & 1 & \dots & 0 \\
        \vdots & \vdots & \vdots & \vdots & \vdots \\  
         1 & 0 & 0 & \dots & 0
          \end{bmatrix},
 \end{equation*}
 with eigenvalues $w^j$ and eigenvectors 
 $v_j^{T} = (1/\sqrt{n})[1, w^j, w^{2j}, \dots, w^{(n-1)j}]$ for $0 \leq j \leq n-1$, where $w = e^{(2 \pi i/n)}$. The normalised adjacency matrix $\Bar{A} = A/2$ is the transition matrix of the classical random walk on an $n$-cycle. The $n$ eigenvalues for $j=1,2, \ldots, n-1$ are
 \begin{align*}
  \lambda_j & = \frac{1}{2}( w^j + w^{(n-1)j}) =  \cos{\left(\frac{2 \pi j}{n}\right)}.
      \end{align*}

The continuous time quantum walk operator $U(t) = e^{i\Bar{A}t}$ can been written as
  \begin{equation}
     U(t) = \sum_{j = 0}^{n-1} e^{i \lambda_j t} \ketbra{v_j}{v_j} .
  \end{equation}
  where \[\ketbra{v_j}{v_j} = \frac{1}{n}\begin{bmatrix}
          1 & \Bar{w}^j & \dots & \Bar{w}^{(n-1)j}  \\
          w^j & 1 &  \dots & \Bar{w}^{(n-2)j} \\
        \vdots & \vdots & \vdots & \vdots \\  
         w^{(n-1)j} & w^{(n-2)j} & \dots & 1
          \end{bmatrix}. \]

We can rewrite the $U(t)$ as 

\begin{equation}
     U(t) = \left(\sum_{j = 0}^{n-1} e^{it(\cos{(2 \pi j/n)})}\ketbra{v_j}{v_j} \right).
  \end{equation}

The probability to go from some vertex $\ket{p}$ to another vertex $\ket{q}$ on the graph $\mathbb{Z}_{n}$  in time $t$ is given by
\begin{equation}
    P_{t}(p,q) =|\bra{q}U(t)\ket{p}|^2 =\Bigg|\bra{q}\sum_{j = 0}^{n-1} e^{it(\cos{(2 \pi j/n)})} \ketbra{v_j}{v_j}  \ket{p}\Bigg|^2.
\end{equation}

We break the probability equation into time-dependent and time-independent parts.

\begin{equation}\label{eq:mod}
\begin{aligned}
 |.|^2 & = \frac{1}{n^2 }\left( n + \sum_{j,k = 0,j \neq k}^{n-1} e^{it(\cos{(2 \pi j/n)}-\cos{(2 \pi k/n)})} w^{(q-p)(j-k)}\right),\\
 & = \Bigg(\frac{1}{n}+ \frac{1}{n^2}\sum_{j = 1}^{n-1} w^{2(q-p)j} + \\
 & = \frac{1}{n^2} \sum_{j,k = 0,j \neq k, j+k \neq n}^{n-1} e^{it(-2\sin{(\pi (j+k)/n)}\sin{(\pi (j-k)/n)})} w^{(q-p)(j-k)}\Bigg).
\end{aligned}  
\end{equation}

To reduce the complexity, we define

\begin{align}
 \sum_{j = 1}^{n-1} w^{2(q-p)j} &=\begin{cases}
  n-1  & q-p = 0, \\
   -1  & \text{otherwise},   
\end{cases}  \\
& = n{\cal{I}}\{q-p=0\}-1,
\end{align}
where ${\cal{I}}$ is indicator function, and 

\begin{equation}\label{nt}
   n(t) = \sum_{j,k = 0,j \neq k,j+k \neq n}^{n-1} e^{-it(\sin{(\pi (j+k)/n)}\sin{(\pi (j-k)/n)})} w^{(q-p)(j-k)}
\end{equation}

 We will use this formulation in the mixing time on $\mathbb{Z}_{n_1}\times\mathbb{Z}_{n_2}$ proof, section~\ref{repeatedq}.

\section{{\em P\MakeLowercase{roof of lemma \ref{lemma2}}}}\label{lemma2app}

\begin{proof} 
We will establish (\ref{case:n1}); a similar argument justifies (\ref{case:n2}). The integral $\int_{0}^{T} n_1(t) dt$ can be 
computed directly by standard integration. (In the following we have replaced $j_1 - k_1$ by $- (k_1 -j_1)$ in Eq.~(\ref{nt}).)
\begin{equation}
 \begin{aligned}
 &  \left |\int_{0}^{T} n_1(t) dt \right| = \\     
& \Bigg| \sum_{j_1,k_1 = 0, k_1>j_1 , j_1+k_1 \neq n_1}^{n_1-1}
2\Bigg(\frac{\sin{\left(T\sin{(\pi (j_1+k_1)/n_1)}\sin{(\pi (k_1-j_1)/n_1)}+2\pi(l_1-0)(k_1-j_1)/n_1\right)}}{\sin{(\pi (j_1+k_1)/n_1)}\sin{(\pi (k_1-j_1)/n_1)}}\\
 &- \frac{\sin{(2\pi(l_1-0)(k_1-j_1)/n_1})}{\sin{(\pi (j_1+k_1)/n_1)}\sin{(\pi (k_1-j_1)/n_1)}}\Bigg)\Bigg|.
\end{aligned}   
\end{equation}  
{Taking absolute values, replacing the numerator by $1$, we obtain}
 \begin{align}\label{positive_term}
     & \leq 4\sum_{j_1,k_1 = 0, k_1>j_1 , j_1+k_1 \neq n_1}^{n_1-1}
\Bigg|\frac{1}{\sin{(\pi (j_1+k_1)/n_1)}\sin{(\pi (k_1-j_1)/n_1)}}\Bigg|.
 \end{align}
Note that the map $(j_1,k_1) \mapsto (j_1+k_1, k_1-j_1)$ from 
$\{(j_1,k_1): 0 \leq j_1 < k_1 \leq n_1-1\}$ to
$\{1,2,\dots, 2n_1 -3\}\times \{1,2, \dots, n_1-1\}$ (its inverse is
$(y,z) \mapsto ((y-z)/2, (y+z)/2)$).

\begin{align}
    &\leq 4\sum_{y=1, y\neq n_1}^{2n_1-3}\sum_{z =1 }^{n_1-1}
\Bigg|\frac{1}{\sin{(\pi (y)/n_1)}\sin{(\pi (z)/n_1)}}\Bigg|,\\
&\leq 4\sum_{y=1, y\neq n_1}^{2n_1-3}\sum_{z =1 }^{n_1-1}
\Bigg|\frac{1}{\sin{(\pi (y)/n_1)}\sin{(\pi (z)/n_1)}}\Bigg|,
\end{align}
\begin{align}   
&\leq \Bigg(4\sum_{y=1, y\neq n_1}^{2n_1-3}\Bigg|\frac{1}{\sin{(\pi (y)/n_1)}}
\Bigg|\Bigg)\Bigg(
\sum_{z =1 }^{n_1-1}
\frac{1}{\sin{(\pi (z)/n_1)}}\Bigg),\\
&\leq 8\Bigg(\sum_{y=1}^{n_1-1}\frac{1}{\sin{(\pi (y)/n_1)}}
\Bigg)\Bigg(
\sum_{z =1 }^{n_1-1}
\frac{1}{\sin{(\pi (z)/n_1)}}\Bigg). \label{eq:twosums}
\end{align}

We can bound the two sums in the RHS as follows:
\begin{equation}
    \begin{aligned}
     \sum_{y = 1}^{n_1-1}\frac{1}{\sin{(\pi y/n_1)}} &\leq  \sum_{y \leq (n_1-1)/1000} \frac{2}{\sin{(\pi y/n_1)}} +\sum_{y \geq (n_1-1)/1000, y < (n_1-1)/2} \frac{2}{\sin{(\pi y/n_1)}},\\
    & \leq \sum_{y \leq (n_1-1)/1000} \frac{2n_1}{\pi y} +  \frac{2}{\sin{(\pi/1000)}} \frac{n_1}{2}, \\
    & \leq \frac{2 n_1 \log{((n_1-1)/1000)}}{\pi}+ \frac{2}{\sin{(\pi/1000)}} \frac{n_1}{2},\\
    & \leq 2(n_1 \log{(n_1)}).
\end{aligned}
\end{equation}
We conclude from Eq.(~\ref{eq:twosums}) that
\begin{equation}
    \left|\int_{0}^{T} n_1(t) \right|dt \leq 32(n_1 \log(n_1))^2.
\end{equation}

\end{proof}

\section{{\em Q\MakeLowercase{uantum walk algorithm}}}\label{qalgo}

The quantum walk we consider is the following. 
The underlying graph 
$G(V,E)$ is the $d$-regular graph with vertex set $V=\mathbb{Z}_{n_1}\times \mathbb{Z}_{n_2}\times \cdots \times \mathbb{Z}_{n_d}$ and edge set $E$ consisting of pairs $\{x,x'\}$, where $\sum_{i} |x_i-x'_i| = 1$. 
Let $\Bar{A}$ be the normalised adjacency matrix of $G$ defined as $\Bar{A}(i,j) = 1/d$ if vertex $i$ is adjacent to vertex $j$ in $G$, and $0$ otherwise. Then the continuous time quantum walk operator for time $t$ is $U(t) = e^{i \Bar{A} t}$. Starting from state $\ket{x_0} = \ket{p}$, where $p \in V $, the quantum walk algorithm performs the following $T T'$ steps.

\bigskip
\noindent\hrule
\begin{center}
\textbf{Algorithm 1}
\end{center}
\noindent\hrule
\begin{itemize}
    \item Quantum walk algorithm $(p, T',T)$
    \begin{enumerate}
        \item $r = 0$; $\ket{x_0}=\ket{p}$;
        \item While $(T' \geq r)$ 
        \begin{itemize}
            \item Perform the quantum walk starting with $\ket{x_r}$ for time $t$ chosen uniformly at random from $[0,T]$;
            
        \item Let $\ket{\psi_{r+1}}= e^{i\Bar{A}t} \ket{x_r}$;
        \item Measure $\ket{\psi_{r+1}}$ in the position basis and obtain the state $\ket{x_{r+1}}$;
        \item $r = r+1$ ;
        \end{itemize}
    \item Output $\ket{x_r}$
    \end{enumerate}
\end{itemize}
\hrule
\bigskip
\noindent 

\section{{\em P\MakeLowercase{roofs of bounds in Theorem \ref{thm3}}}}\label{bounds}

We give here the bounds proved for the cases in Theorem $\ref{thm3}$.

\begin{center}

 \textbf{ case 0: $l_1 = 0$, $l_2 = 0$ }  
 
\end{center}

\begin{equation}
 \begin{aligned}
 \left|P_{T}(0,0) - \frac{1}{n_1n_2}\right|= & \Bigg|\frac{1}{T} \int_{0}^{T} \frac{1}{(n_1n_2)^2} \Bigg( n_1 (n_2-1) + n_1 n_2(t) + n_2 (n_1-1)+ (n_1-1)(n_2-1)\\
  &+ (n_1-1) n_2(t) + n_2 n_1(t)+ (n_2-1) n_1(t)  + n_1(t) n_2(t)  \Bigg)dt\Bigg|.  
\end{aligned}   
\end{equation}

\begin{equation}
 \begin{aligned}
  =& \Bigg|\frac{n_1 (n_2-1)}{(n_1n_2)^2} +  \frac{n_1}{(n_1n_2)^2} \frac{1}{T} \int_{0}^{T} n_2(t) dt + \frac{n_2 (n_1-1)}{(n_1n_2)^2}+ \frac{(n_2-1)(n_1-1)}{(n_1n_2)^2}\\
  &+\frac{n_1-1}{(n_1n_2)^2} \frac{1}{T} \int_{0}^{T} n_2(t) dt + \frac{n_2}{(n_1n_2)^2} \frac{1}{T} \int_{0}^{T} n_1(t) dt+\frac{n_2-1}{(n_1n_2)^2} \frac{1}{T} \int_{0}^{T} n_1(t) dt \\ 
  &+ \frac{1}{(n_1n_2)^2} \frac{1}{T} \int_{0}^{T} n_1(t) n_2(t) dt\Bigg|. 
\end{aligned}   
\end{equation}

\begin{equation}
 \begin{aligned}
 \left|P_{T}(0,0) - \frac{1}{n_1n_2}\right| & \leq \frac{n_1 (n_2-1)+ n_2 (n_1-1)+(n_1-1)(n_2-1)}{(n_1n_2)^2} \\
 &+ \frac{32(2n_1-1) (n_2 \log{n_2})^2 +32 (2n_2-1) (n_1 \log{n_1})^2}{T(n_1n_2)^2}\\
 & + \frac{ 32 n_1(n_2 \log(n_2))^2+ 32 n_2(n_1 \log(n_1))^2}{T(n_1n_2)^2},\\
 &\leq \frac{3}{n_1n_2} + \frac{64(\log(n_2))^2}{n_1 T} + \frac{64(\log(n_1))^2}{n_2 T} + \frac{32 n_2(\log(n_2))^2+ 32n_1(\log(n_1))^2}{(n_1n_2) T}.
\end{aligned}   
\end{equation}

For $T = 1600(n_1 +n_2)(\log{(n_1)^2})$,

\begin{equation}
 \begin{aligned}
\left|P_{T}(0,0) - \frac{1}{n_1n_2}\right| & \leq    \frac{3}{n_1n_2} + \frac{64(\log(n_2))^2}{n_1 1600(n_1 +n_2)(\log{(n_1)^2})} + \frac{64(\log(n_1))^2}{n_2 1600(n_1 +n_2)(\log{(n_1)^2})}, \\
&+ \frac{32 n_2(\log(n_2))^2+ 32 n_1(\log(n_1))^2}{(n_1n_2)1600(n_1 +n_2)(\log{(n_1)^2})},\\
& \leq \frac{3}{n_1n_2} + \frac{1}{25n_2^2} + \frac{1}{50n_2^2}\\
& \leq \frac{4}{ n_2^2}.
\end{aligned}   
\end{equation}

\begin{center}
 
 \textbf{case 1 : $l_1 = 0$ and $l_2 \neq 0$} (There are $(n_2 -1)$ such terms.)  
 
\end{center}

\begin{equation}
 \begin{aligned}
  \left|P_T(0,l) - \frac{1}{n_1n_2}\right| & = \Bigg|\frac{1}{T} \int_{0}^{T} \frac{1}{(n_1n_2)^2} \Bigg( - n_1 + n_1 n_2(t) + n_2 (n_1 -1)\\
    &-(n_1 - 1)+(n_1 -1) n_2(t) + n_2 n_1(t)\\
    &- n_1(t)  + n_1(t) n_2(t)  \Bigg)dt\Bigg|.
\end{aligned}   
\end{equation}

\begin{equation}
    \begin{aligned}
     &\leq \frac{2 n_1 n_2}{(n_1n_2)^2}+ \frac{32 (2n_1-1)(n_2\log{n_2})^2}{T(n_1n_2)^2}  \\
     &+ \frac{32 (n_2-1)(n_1\log{n_1})^2}{T(n_1n_2)^2} + \frac{32 n_2(\log(n_2))^2+ 32n_1(\log(n_1))^2}{(n_1n_2) T},\\
     & \leq \frac{2}{n_1 n_2} + \frac{64(\log{n_2})^2}{n_1 T} + \frac{32(\log{n_1})^2}{n_2T}+ \frac{32 n_2(\log(n_2))^2+ 32n_1(\log(n_1))^2}{(n_1n_2) T}.
\end{aligned}
\end{equation}

We multiply by $n_2$ to get bound on these terms. 

\begin{equation}
    \begin{aligned}
   n_2\left|P_T(0,l) - \frac{1}{n_1n_2}\right| & \leq  \frac{2}{n_1} + \frac{64 n_2(\log{n_2})^2}{n_1 T} + \frac{32(\log{n_1})^2}{T}\\
   & + \frac{32 n_2(\log(n_2))^2+ 32n_1(\log(n_1))^2}{n_1 T}.   
\end{aligned}
\end{equation}

 So for $T = 1600(n_1+n_2)(\log{n_1})^2$,

\begin{equation}
    \begin{aligned}
   n_2\left|P_T(0,l) - \frac{1}{n_1n_2}\right| & \leq  \frac{2}{n_1} + \frac{1}{50 n_2} + \frac{1}{100 n_2}+\frac{1}{50 n_2},\\
   & = \frac{205}{100 n_2},\\
   & \leq \frac{3}{n_2}.
\end{aligned}
\end{equation}

\begin{center}
 
 \textbf{case 2 : $l_1 \neq 0$ and $l_2 = 0$} (There are $(n_1 -1)$ such terms.)
   
\end{center}

\begin{equation}
 \begin{aligned}
  \left|P_T(0,l) - \frac{1}{n_1n_2}\right| & = \Bigg|\frac{1}{T} \int_{0}^{T} \frac{1}{(n_1n_2)^2} \Bigg( n_1(n_2-1) + n_1 n_2(t) - n_2\\
    &-(n_2 - 1)- n_2(t) + n_2 n_1(t)\\
    &+(n_2 -1)n_1(t)  + n_1(t) n_2(t)  \Bigg)dt\Bigg|.
\end{aligned}   
\end{equation}

\begin{equation}
\begin{aligned}
    \leq \frac{2}{n_1 n_2} + \frac{64(\log{n_1})^2}{n_2 T} + \frac{32(\log{n_2})^2}{n_1T}+ \frac{32 n_2(\log(n_2))^2+ 32 n_1(\log(n_1))^2}{(n_1n_2) T}. 
\end{aligned}    
\end{equation}

\begin{align*}
     n_1\left|P_T(0,l) - \frac{1}{n_1n_2}\right| & \leq 
     \frac{2}{n_2} + \frac{64 n_1(\log{n_1})^2}{n_2 T} + \frac{32(\log{n_2})^2}{T}\\
     &+ \frac{32 n_2(\log(n_2))^2+ 32 n_1(\log(n_1))^2}{n_2 T}.
\end{align*}

We multiply by $n_1$ to get bound on these terms. For $T = 1600 (n_1 + n_2)(\log{n_1})^2$,

\begin{equation}
 \begin{aligned}
  n_1\left|P_T(0,l) - \frac{1}{n_1n_2}\right| & \leq 
        \frac{2}{n_2} + \frac{1}{25 n_2} + \frac{1}{100 n_2} +\frac{1}{50 n_2},\\
        & \leq \frac{3}{n_2}.
\end{aligned}   
\end{equation}

\begin{center}

\textbf{case 3 : $l_1 \neq 0$ and $l_2 \neq 0$} (There are $(n_1 n_2 -n_1- n_2 + 1)$ such terms.) 
    
\end{center}

\begin{equation}
 \begin{aligned}
  \left|P_T(0,l) - \frac{1}{n_1n_2}\right| & = \Bigg|\frac{1}{T} \int_{0}^{T} \frac{1}{(n_1n_2)^2} \Bigg( -n_1 + n_1 n_2(t) - n_2 + 1 - n_2(t)\\ 
  &+ n_2 n_1(t)- n_1(t)  + n_1(t) n_2(t)  \Bigg)dt\Bigg|,\\
  & = \Bigg|\frac{1}{T} \int_{0}^{T} \frac{1}{(n_1n_2)^2} \Bigg( -(n_1 +n_2) + 1 - (n_2(t)+ n_1(t)) \\
  &+ n_1 n_2(t)+ n_2 n_1(t) + n_1(t) n_2(t)  \Bigg)dt\Bigg|,\\
  & \leq \frac{(n_1 +n_2+1)}{(n_1n_2)^2} +\frac{32(n_2 \log(n_2))^2}{T(n_1n_2)^2} + \frac{32(n_1 \log(n_1))^2}{T(n_1n_2)^2}\\
  &+ \frac{32 n_1(n_2 \log(n_2))^2}{T(n_1n_2)^2} + \frac{32 n_2(n_1 \log(n_1))^2}{T(n_1n_2)^2} \\
  &+ \frac{32 n_1(n_2 \log(n_2))^2+ 32 n_2(n_1 \log(n_1))^2}{T(n_1n_2)^2}. 
\end{aligned}   
\end{equation}
We will multiply it by $n_1 n_2$ to get the bound on these terms.

\begin{equation}
\begin{aligned}
  n_1 n_2 \left|P_T(0,l) - \frac{1}{n_1n_2}\right| & \leq \frac{(n_1 +n_2+1)}{(n_1n_2)} +\frac{32(n_2 \log(n_2))^2}{T(n_1n_2)} + \frac{32(n_1 \log(n_1))^2}{T(n_1n_2)}\\
  &+ \frac{32 n_1(n_2 \log(n_2))^2}{T(n_1n_2)} + \frac{32 n_2(n_1 \log(n_1))^2}{T(n_1n_2)} \\
  &+ \frac{32 n_1(n_2 \log(n_2))^2+ 32 n_2(n_1 \log(n_1))^2}{T(n_1n_2)}.   
\end{aligned}    
\end{equation}

For $T = 1600 (n_1 + n_2) (\log{(n_1)})^2$,

\begin{align*}
    n_1 n_2 \left|P_T(0,l) - \frac{1}{n_1n_2}\right| & \leq \frac{(n_1 +n_2+1)}{(n_1n_2)} +\frac{32(n_2 \log(n_2))^2}{1600 (n_1 + n_2) (\log{(n_1)})^2(n_1n_2)} \\
  &+ \frac{32(n_1 \log(n_1))^2}{1600 (n_1 + n_2) (\log{(n_1)})^2(n_1n_2)}+ \frac{32 n_1(n_2 \log(n_2))^2}{1600 (n_1 + n_2) (\log{(n_1)})^2(n_1n_2)}
\end{align*}
\begin{equation}
\begin{aligned}
  &+ \frac{32 n_2(n_1 \log(n_1))^2}{1600 (n_1 + n_2) (\log{(n_1)})^2(n_1n_2)} + \frac{32 n_1(n_2 \log(n_2))^2+ 32 n_2(n_1 \log(n_1))^2}{1600 (n_1 + n_2) (\log{(n_1)})^2(n_1n_2)},\\
  & \leq \frac{(n_1 +n_2+1)}{(n_1n_2)} + \frac{3}{100 n_2} + \frac{2}{50}, \\
  & \leq \frac{3}{n_2}+ \frac{2}{50}.
\end{aligned}    
\end{equation}

\section{\em S\MakeLowercase{imulation supporting  conjecture}}\label{simul}

We considered pairs $(n_1, n_2)$ such that $n_1 > n_2$ are odd and relatively prime, and in the range $[10, 200]$. There are 3685 such pairs in this range. We plotted the bound given by the right-hand side of Eq.(~\ref{d2}) in orange, and the value of $\left|\int_{0}^{T} n_1(t) n_2(t) dt\right|$ in blue, as a function of time $T$. The results are shown in Fig. \ref{fig:bound1}.

We observed that the bound holds for all pairs $(n_1, n_2)$, and the increasing gap between the two curves suggests that it will also hold for later times. These numerical simulations provide supporting evidence for our conjecture.

\begin{figure}[ht!]
        \centering
        \includegraphics[ trim = 110 240 100 235, width = \textwidth, keepaspectratio]{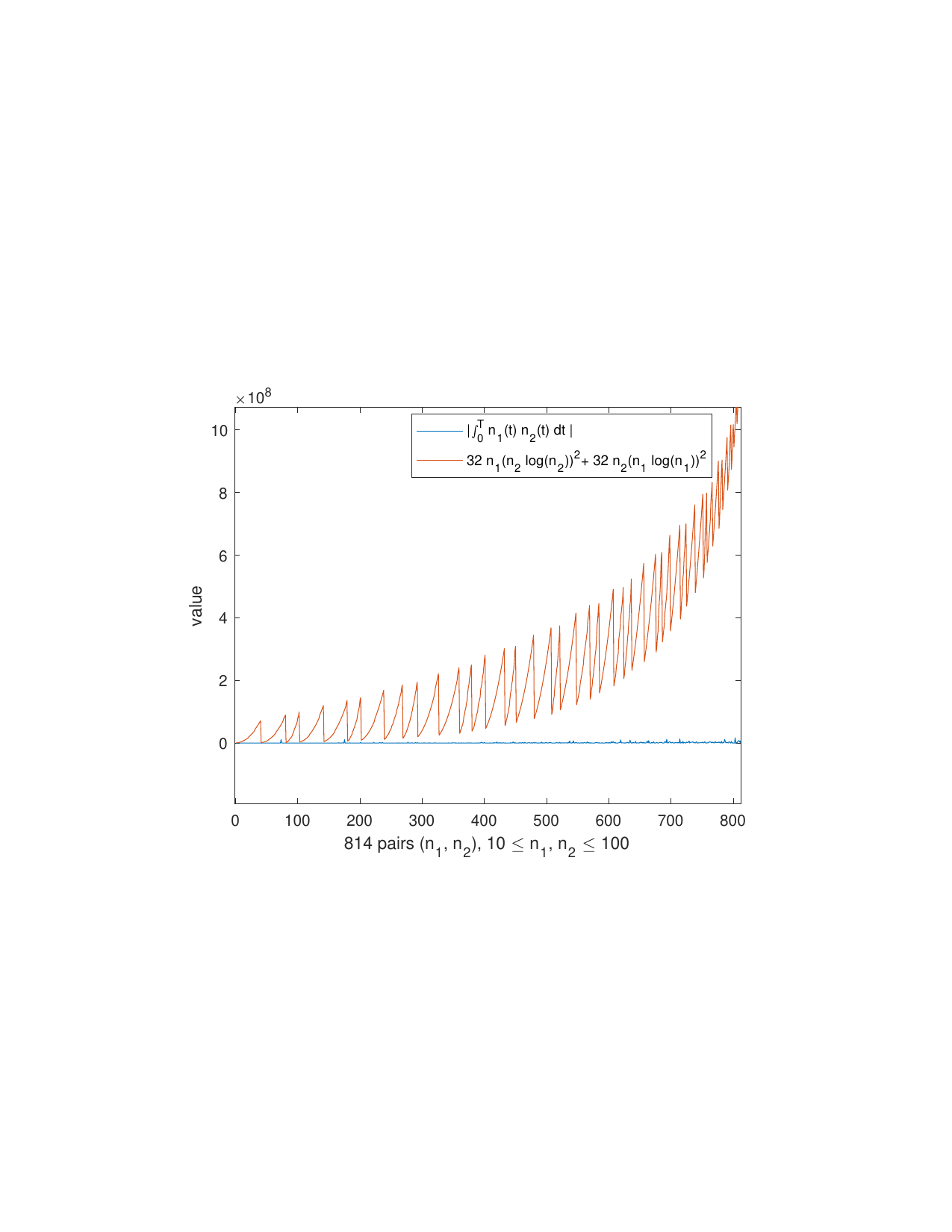}
        \caption{\textbf{Numerical evidence of the bound in Eq.~\ref{d2}:} In this figure, we plot LHS(\textcolor{blue}{blue} blue colour) of conjecture \ref{conj1} for different $(n_1, n_2)$ pairs where $10 \leq n_1, n_2 \leq 100$. The plot looks fluctuating because in numerical simulation, once $n_2$ is selected, then $n_1$ takes all values between $(n_2+1, 100)$. The bound is given in \textcolor{orange}{orange} colour. Note that the plot is scaled at $10^8$ order; hence, the \textcolor{blue}{blue} line seems flat.}
        \label{fig:bound1}
    
    \end{figure}

\end{document}